\documentclass[journal,doublecolumn]{IEEEtran}
\usepackage{cite}
\usepackage[english]{babel}
\usepackage{float}
\usepackage{stfloats}
\usepackage{amsthm}
\usepackage{amssymb}
\usepackage{tikz}
\usetikzlibrary{shapes,arrows,positioning,calc}
\usepackage{commath}
\usepackage{mathrsfs}
\usepackage{mdframed}
\usepackage{multirow}
\usepackage{graphicx}
\usepackage{booktabs}
\usepackage{tabularx}
\usepackage{subcaption}
\usepackage{array}
\usepackage{comment}
\usepackage{relsize,amsfonts}
\usepackage{mathtools}
\usepackage{algorithm}
\usepackage{algorithmic}
\usepackage{bbm}
\usepackage[colorlinks=true, allcolors=blue]{hyperref} 

\newtheorem{Theorem}{Theorem}
\newtheorem{Lemma}{Lemma}
\newtheorem{Corollary}{Corollary}
\newtheorem{Proposition}{Proposition}
\newtheorem{Assumption}{Assumption}
\newtheorem{Remark}{Remark}
\newtheorem{Definition}{Definition}

\newcommand{\E}{\mathbb{E}}
\newcommand{\R}{\mathbb{R}}
\newcommand{\argmin}{\mathop{\mathrm{arg\,min}}}

\newcommand{\simplex}{\Delta}
\newcommand{\clients}{\mathcal{N}}
\newcommand{\fast}{\mathcal{F}}
\newcommand{\slow}{\mathcal{S}}

\newcommand{\Ls}{L_{\mathrm{sm}}}      
\newcommand{\mc}{m_{\mathrm{cv}}}      
\newcommand{\scoef}{\omega}            
\newcommand{\sconst}{\nu}              
\newcommand{\ba}{\boldsymbol{a}}
\newcommand{\bb}{\boldsymbol{b}}
\newcommand{\bw}{\boldsymbol{w}}

\title{Tail-Latency-Aware Federated Learning with Pinching Antenna: Latency, Participation, and Placement}

\author{Yushen Lin,~\IEEEmembership{Student Member,~IEEE,}
       and {Zhiguo Ding,~\IEEEmembership{Fellow,~IEEE}}
\vspace{-2em} 
\thanks{Y. Lin is with the School of Electrical and Electronic Engineering, The University of Manchester, M13 9PL, U.K. (e-mail: yushen.lin@manchester.ac.uk).
Zhiguo Ding is with the School of Electrical \& Electronic Engineering, Nanyang Technological University, 639798, Singapore;  (e-mail: zhiguo.ding@ntu.edu.sg).}
}

\begin{document}

\maketitle

\begin{abstract}
Straggler synchronization is a dominant wall-clock bottleneck in synchronous wireless federated learning (FL). Under non-IID data, however, aggressively sampling only fast clients may significantly slow convergence due to statistical heterogeneity. This paper studies PASS-enabled FL, where a radiating pinching antenna (PA) can be activated at an arbitrary position along a dielectric waveguide to reshape uplink latencies. We consider a joint optimization of PA placement and client participation to minimize the expected time-to-accuracy, coupling the exact expected maximum round latency via order statistics with a heterogeneity-aware convergence factor. We derive first-order optimality conditions that reveal an explicit tail-latency premium in the KKT recursion, quantifying how latency gaps are amplified by maximum-order-statistic synchronization.
Under a latency-class structure, we obtain a within-class square-root sampling law and establish a two-class phase transition where slow-class participation collapses under an explicit heterogeneity-threshold condition as the per-round sample size grows.
For PA placement, we prove a piecewise envelope-derivative characterization and provide an exact breakpoint-and-root candidate-enumeration procedure. Simulation results verify the theoretical findings and show that PASS enables more eligible participation, yielding higher wall-clock accuracy.
\end{abstract}

\begin{IEEEkeywords}
pinching antenna, federated learning (FL), non-IID, tail latency, stragglers, antenna placement
\end{IEEEkeywords}
\section{Introduction}
Federated learning (FL) has attracted research interest due to its ability to address data privacy concerns while enabling collaborative learning over large-scale distributed data, as well as alleviating the heavy computational burden on end clients \cite{FL_original, Zhaohui_FL_survey}.
However, wireless FL is often wall-clock limited by stragglers: in synchronous FL, each round must wait for the slowest selected uplink (the ``straggler''), and the slowest-link tail dominates the overall training time \cite{FL_Wireless_Qin}. 
Meanwhile, non-IID data create statistical heterogeneity, making aggressive selection of ``fast'' devices potentially harmful to convergence \cite{FL_with_non_iid}. 

Pinching-antenna systems (PASS) have recently gained significant research attention as a physical-layer mechanism to control the propagation environment \cite{fang_PASS}. PASS operates by dynamically activating radiating pinching antennas (PAs) along a dielectric waveguide. By activating PAs at different positions, PASS can create line-of-sight (LoS) links and establish short "last-meter" connections \cite{Ding2025FlexibleAntenna, Liu2025PASSTutorial}, which has the potential to mitigate stragglers by shortening the worst links and stabilizing LoS connectivity in challenging high-frequency environments \cite{Lin2025PASSFL, Liu2025PASSOutlook}.

\paragraph{PASS}
At a foundational level, \cite{Ding2025FlexibleAntenna} studies pinching antennas within the broader class of flexible-antenna systems, developing analytical results for the basic single-waveguide/single-PA setting to quantify how position control can mitigate large-scale path loss by creating strong LoS links.
Since PA location is the main control knob in PASS, placement optimization has also been studied to obtain sharper insights.
In \cite{DingPoor2025Placement}, closed-form placement solutions are derived for several multiple-access objectives.
For user-fairness-oriented OMA, the optimal PA activation location must be beneficial to all served users and does not depend on users' perpendicular distances to the waveguide \cite{DingPoor2025Placement}.

Joint resource allocation is challenging because PA locations couple with conventional variables such as transmit power and time/frequency resources \cite{Zeng2025RAPASS}.
The work in \cite{Zhao2025MultiUserPASS} proposes multi-waveguide transmission structures, including waveguide multiplexing, waveguide division, and waveguide switching. It formulates max--min fairness beamforming problems for multi-group multicast, and develops penalty dual decomposition-based algorithms to handle the complex exponential and fractional coupling induced by pinching beamforming \cite{Zhao2025MultiUserPASS}.
In \cite{Li2025PASSISAC}, a multi-waveguide PASS-assisted ISAC architecture is proposed, and sensing accuracy is measured by a CRB metric.
The resulting CRB-minimization problem is solved by alternating between SDR-based digital beamforming and penalty-based PA deployment updates under communication QoS and deployment constraints \cite{Li2025PASSISAC}.

\paragraph{Straggler and dropout mitigation in federated learning}
In synchronous FL, the wall-clock time of each round is dictated by the slowest selected client, which makes training sensitive to stragglers.
A large body of work mitigates this effect at the system and protocol layers, often without changing the underlying wireless channel.
Client selection and scheduling are common approaches: Oort prioritizes clients using a utility-and-speed signal to improve training efficiency \cite{Lai2021Oort}, while TiFL uses a tier-based design to reduce waiting time and to manage device heterogeneity \cite{Chai2020TiFL}.
Another direction relaxes strict synchronization.
FedBuff introduces buffered asynchronous aggregation to reduce idle waiting and to better tolerate stragglers \cite{Nguyen2022FedBuff}.
Recent wireless-edge work also uses straggler-aware grouping and clustering to stabilize training under heterogeneous link qualities \cite{Liu2024FeDSC}.
From a learning-theoretic angle, straggler-resilient FL has been studied by explicitly balancing statistical accuracy with system heterogeneity, which clarifies when faster participation helps or hurts convergence \cite{Reisizadeh2022StragglerResilient}.
These methods are effective in many settings, but they typically treat the channel and latency distribution as fixed.
As a result, they often rely on biased participation (favoring fast clients) or asynchrony, both of which can interact poorly with non-IID data.

\paragraph{PASS-empowered edge learning and PASS-FL}
Unlike the above approaches, PASS can reshape the physical-layer latency landscape itself by controlling where radiation occurs.
This capability has made PASS attractive for edge intelligence, where training time is dominated by extreme straggler rounds rather than average latency.
A broader view on the synergy between pinching antennas and AI argues that AI can control pinching positions, while pinching antennas can support edge-AI workloads such as FL and over-the-air aggregation \cite{Fang2025PinchingAI}.
At the aggregation level, PASS has been studied for AirComp to reduce channel misalignment and improve aggregation accuracy through joint placement and communication design \cite{Lyu2025PASSAirComp}.
For FL specifically, recent work analyzed PASS-enabled FL in terms of tail latency, participation probability, and convergence behavior, highlighting that addressing stragglers at the PHY layer can complement higher-layer scheduling \cite{Lin2025PASSFL}.
Related studies also considered hybrid networks that combine conventional access and pinching antennas, and optimized placement and resource allocation to improve communication efficiency for FL \cite{Wu2025HCPAN}.
However, existing PASS-FL works focus primarily on demonstrating latency or participation improvements through system-level optimization; they do not provide analytical characterizations of the optimal participation--latency tradeoff under non-IID data or derive structural properties of the joint placement-and-sampling problem.

This paper addresses these gaps with the following contributions:
\begin{itemize}
    \item  We propose a cross-layer time-to-accuracy objective for PASS-enabled wireless FL under non-IID data by coupling exact straggler latency with a heterogeneity-aware convergence factor.
    \item For a fixed PA location, we characterize locally optimal client sampling via KKT conditions and reveal a tail-latency premium that quantifies straggler amplification across latency gaps.
    \item We derive a class-wise reduction that yields a within-class square-root law, and establish a two-class phase transition where slow-class participation collapses to the $1/K$ scale under explicit conditions.
    \item We provide an exact piecewise envelope-derivative characterization for PA placement and develop a breakpoint-and-root candidate search that is globally optimal for the outer envelope objective. 
    \item Simulations validate the phase transition, demonstrate Pareto-dominant operating points in the latency--convergence trade-off, and show substantial accuracy gains under deadline-constrained FL.
\end{itemize}

\paragraph{Organization.}
Section~\ref{sec:model} introduces the PASS link/latency model and the order-statistics straggler functional.
Section~\ref{sec:kkt} studies optimal client participation for a fixed PA location, derives the
tail-latency-aware KKT conditions, and establishes the latency-class reduction and
phase transition.
Section~\ref{sec:placement} characterizes the PA placement problem via a piecewise envelope derivative and presents a breakpoint-and-root candidate search algorithm.
Section~\ref{sec:simulation} provides numerical experiments, followed by conclusions in Section~\ref{sec:conclusion}.

\section{System Model} \label{sec:model}
We consider a single-waveguide PASS in which a radiating pinching antenna is activated at position $x$ along the waveguide, where $L>0$ denotes the length of the dielectric waveguide, so that the PA activation position satisfies $x\in[0,L]$.
The Single-antenna client $i$ has the projection of the waveguide $u_i\in\R$ and the effective transverse distance $r_i>0$ to the waveguide, where $i \in \mathcal{N}\triangleq\{1,2,\ldots,N\}$ denotes the set of clients and $N$ is the total number of clients. The PA--client distance is
\begin{equation}
    d_i(x)^2=(x-u_i)^2+r_i^2.
\end{equation}
Following the standard PASS link model \cite{Ding2025FlexibleAntenna,Lin2025PASSFL}, the end-to-end uplink channel gain consists of free-space LoS spherical-wave propagation from the PA to each client. Specifically, the complex baseband channel from the PA at $x$ to client $i$ is modeled as \cite{Lyu2025PASSAirComp}
\begin{subequations}\label{eq:pass_link}
\begin{align}
h_i(x)
&= \sqrt{\eta_f}\,
\frac{e^{-j k_0 d_i(x)}}{d_i(x)}, \label{eq:pass_link_h}\\
\gamma_i(x)
&= \frac{P_i}{\sigma^2}\,|h_i(x)|^2, \label{eq:pass_link_snr}
\end{align}
\end{subequations}
where $P_i$ is the uplink transmit power of client $i$ and $\sigma^2$ is the receiver noise power.
Here, $\eta_f \triangleq \frac{c_0^2}{16\pi^2 f_c^2}$ is the Friis free-space factor,
and $k_0 \triangleq 2\pi f_c/c_0$ is the free-space wavenumber.

Each selected client uploads $S_i$ bits over bandwidth $B$. The per-round latency of client $i$ is
\begin{equation}
    t_i(x)=t_i^{\mathrm{comp}}+\tau_i(x),
    \qquad
    \tau_i(x)=\frac{S_i}{B\log_2\big(1+\gamma_i(x)\big)},
    \label{eq:ti}
\end{equation}
where $t_i^{\mathrm{comp}}$ denotes the local computational time of client $i$.

For a fixed PA position $x$, we relabel clients so that their per-round latencies satisfy
\begin{equation}
    t_1(x)\le t_2(x)\le \cdots\le t_N(x).
    \label{eq:sorted-t}
\end{equation}
All quantities indexed by $i$ refer to this sorted order. Through \eqref{eq:pass_link}--\eqref{eq:ti}, PA placement $x$ determines the uplink SNR $\gamma_i(x)$ via $d_i(x)$, and thus $t_i(x)$. Varying $x$ reshapes $\{t_i(x)\}_{i=1}^N$ and reorders clients by latency, i.e., the identity of the slowest links can change with $x$. Since synchronous FL waits for the maximum latency among the $K$ sampled clients, the distribution of the straggler depends on the relative ordering of $\{t_i(x)\}$ under the sampling probabilities $q$. This motivates the sorted relabeling in \eqref{eq:sorted-t} and the order-statistics treatment below.

\subsection{Straggler Order Statistics}
In each round, the server samples $K$ clients i.i.d.\ with replacement according to a distribution $q=(q_1,\dots,q_N)$, where
\begin{equation}
    q\in\simplex\triangleq\big\{q\in\R^N:\ q_i>0,\ \sum_{i=1}^N q_i=1\big\}.
    \label{eq:simplex}
\end{equation}
Let $\mathcal{K}(q)$ denote the multiset of sampled clients. The per-round wall-clock time is determined by the straggler:
$T_{\mathrm{round}}(q,x)=\max_{i\in\mathcal{K}(q)} t_i(x)$.
Define the cumulative sampling mass up to client $i$ in sorted order \eqref{eq:sorted-t}:
\begin{equation}
    Q_i\triangleq \sum_{j=1}^i q_j,
    \qquad Q_0\triangleq 0.
    \label{eq:Qi}
\end{equation}
Then it is straightforward to obtain that for fixed $x$ and sorted times \eqref{eq:sorted-t},
\begin{equation}
    f(q,x)\triangleq\E\big[T_{\mathrm{round}}(q,x)\big]
    =\sum_{i=1}^N\big(Q_i^K-Q_{i-1}^K\big)\,t_i(x).
    \label{eq:f-order}
\end{equation}

\subsubsection{Tail-Latency Coupling}
Before analyzing the general $N$-client problem, we illustrate the mechanism of tail-latency amplification using a two-class model.
Consider a fast class with latency $t_f$ and a slow class with latency $t_s=t_f+\Delta$.
Let $\delta\in[0,1]$ be the total sampling mass assigned to the slow class. Under i.i.d.\ sampling:
\[
\mathbb{E}[T_{\mathrm{round}}]
= t_f + \Delta\big(1-(1-\delta)^K\big),
\qquad
\frac{\partial}{\partial \delta}\mathbb{E}[T_{\mathrm{round}}]
=\Delta K(1-\delta)^{K-1}.
\]
When $\delta K\ll 1$, we have $(1-\delta)^{K-1}=1+O(\delta K)$ and thus
\(
\frac{\partial}{\partial \delta}\mathbb{E}[T_{\mathrm{round}}]
=\Delta K\,(1+O(\delta K)).
\)

The term $1-(1-\delta)^K$ is the probability that at least one draw lands in the slow class. The factor $(1-\delta)^{K-1}$ is the probability that the other $K-1$ draws remain in the fast class, so the marginal cost
$\frac{\partial}{\partial\delta}\E[T_{\mathrm{round}}]=\Delta K(1-\delta)^{K-1}$
can be read as
\emph{(gap height)} $\times$ \emph{($K$-fold amplification)} $\times$ \emph{(``near-miss'' probability)}.
This also reveals the natural participation scale $\delta=\rho/K$: then $(1-\delta)^K\to e^{-\rho}$ and the slow class ceases to be an almost-sure straggler.

This two-class intuition of $K$-fold amplification of latency gaps extends to the general $N$-client setting and is formalized in Section~\ref{sec:kkt}.

\subsubsection{Coupling with Non-IID Convergence}

Consider the global FL objective
\begin{equation}
    F(\bw)=\sum_{i=1}^N p_i F_i(\bw),
    \qquad \sum_{i=1}^N p_i=1,\ p_i>0,
\end{equation}
where $p_i$ denotes the target aggregation weight, e.g. $p_i = n_i/\sum_j n_j$, and
client $i$ holds a local dataset $\mathcal{D}_i=\{\xi_{i,1},\ldots,\xi_{i,n_i}\}$ of size $n_i$. The local empirical risk is
\begin{equation}
    F_i(\bw)\triangleq \frac{1}{n_i}\sum_{k=1}^{n_i}\ell(\bw;\xi_{i,k}),
    \qquad i\in\{1,\ldots,N\}.
\end{equation}
where $\ell(\bw;\xi)$ denote the per-sample loss function. The goal of FL is to solve
\begin{equation}
    \min_{\bw\in\mathbb{R}^d} F(\bw).
\end{equation}
\noindent For notational convenience, define the global and local optimal values
\(
    F^\star \triangleq \min_{\bw} F(\bw), \
    F_i^\star \triangleq \min_{\bw} F_i(\bw),
\)
and let $\bw^\star$ be any minimizer of $F$.
We denote by $\nabla F_i(\bw;\xi_i)$ the stochastic gradient computed from a random sample
$\xi_i$ (or a mini-batch) drawn from $\mathcal{D}_i$, i.e.,
$\nabla F_i(\bw;\xi_i)=\nabla \ell(\bw;\xi_i)$ and
$\mathbb{E}_{\xi_i}[\nabla F_i(\bw;\xi_i)]=\nabla F_i(\bw)$.
In each round $r$, the server samples $K$ clients i.i.d.\ according to $\boldsymbol{q}=(q_1,\ldots,q_N)\in\Delta$, broadcasts $\bw^{(r)}$ under FedAvg.
Each selected client $i$ performs $E$ local SGD steps:
\begin{align}
\bw_i^{(r,0)} &= \bw^{(r)},\\
\bw_i^{(r,e+1)} &= \bw_i^{(r,e)}-\eta\,\nabla F_i(\bw_i^{(r,e)};\xi_i^{(r,e)}),
\qquad e=0,\ldots,E-1,
\end{align}
and returns $\bw_i^{(r,E)}$.

\begin{Assumption}\label{ass:hetero_penalty}
For each client $i$, the local objective $F_i$ is $\Ls$-smooth and $\mc$-strongly convex:
for all $\ba,\bb$,
\begin{align}
\|\nabla F_i(\ba)-\nabla F_i(\bb)\| &\le \Ls \|\ba-\bb\|,\\
F_i(\ba) &\ge F_i(\bb)+\nabla F_i(\bb)^\top(\ba-\bb)+\frac{\mc}{2}\|\ba-\bb\|^2.
\end{align}
\end{Assumption}

\begin{Assumption} \label{ass:2}
Assume that for each client $i$ there exist constants $G_i$ and $\chi_i$ such that the variance of the stochastic gradient and the expected squared norm of stochastic gradients are bounded as 
\(
\mathbb{E}\big[\|\nabla F_i(w;\xi_i)\|^2\big] \le G_i^2
\)
and 
\(
\mathbb{E}\big[\|\nabla F_i(w;\xi_i)-\nabla F_i(w)\|^2\big] \le \chi_i^2
\), respectively.
\end{Assumption}
\noindent It is worth noting that Assumptions~\ref{ass:hetero_penalty} and \ref{ass:2} are widely used in stochastic optimization to derive explicit convergence-rate guarantees and sampling-dependent bounds \cite{joint_communications_FL, kaidi_FL, Lin_TWC_2024}.

\begin{Theorem} \label{the1}
Under i.i.d.\ $K$-client sampling with replacement using probabilities $q_i$, let $R_\epsilon$ denote the number of communication rounds required to reach a target accuracy $\epsilon$.
The following upper bound can be obtained:
\begin{equation}
\mathbb{E}[R_\epsilon]\le 
\sum_{i=1}^N\left(
\frac{\scoef}{\varepsilon}\frac{c_i}{q_i}
+ \frac{\sconst}{N\varepsilon}
\right).
\label{eq:round_bound}
\end{equation}
where
$\omega \triangleq \frac{E}{K}\frac{L_{\mathrm{sm}}}{m_{\mathrm{cv}}^2}$,
$c_i \triangleq p_i^2 G_i^2$, $V_{\mathrm{var}}\triangleq \sum_{i=1}^N p_i^2 \chi_i^2$, $V_{\mathrm{eng}}\triangleq \sum_{i=1}^N p_i G_i^2$. Then
\(
\sconst \triangleq 
\frac{\Ls}{\mc^2}\Big(\frac{V_{\mathrm{var}}}{E}+ F^\star-\sum_{i=1}^N p_i F_i^\star + E V_{\mathrm{eng}} \Big)
+\frac{\Ls}{\mc}\|\bw^0-\bw^\star\|,
\)
where $\bw^\star$ is any minimizer of $F$.
\end{Theorem}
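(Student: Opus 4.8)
The plan is to prove Theorem~\ref{the1} by the standard two-stage route for strongly convex FedAvg: a one-round contraction inequality for the iterate error, followed by an unrolling that turns the resulting $O(1/r)$ rate into a round-complexity bound. I would track the potential $V_r \triangleq \E\|\bw^{(r)}-\bw^\star\|^2$ and write the $r$-th FedAvg round as an inexact, importance-reweighted SGD step on $F$: with i.i.d.\ draws $i_1,\dots,i_K\sim q$, the natural unbiased aggregate is $\bw^{(r+1)}=\bw^{(r)}-\frac{1}{K}\sum_{k=1}^K \frac{p_{i_k}}{q_{i_k}}\big(\bw^{(r)}-\bw_{i_k}^{(r,E)}\big)$, whose conditional mean is an $E$-step-averaged full-gradient direction and whose conditional variance splits into a sampling part and a local-drift part.

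For the one-round bound I would: (a) use $\mc$-strong convexity of $F$ (inherited from Assumption~\ref{ass:hetero_penalty}) to extract a contraction factor $1-\Theta(\eta E\mc)$ from the mean direction; (b) use $\Ls$-smoothness together with the second-moment bound $\E\|\nabla F_i(\cdot;\xi_i)\|^2\le G_i^2$ of Assumption~\ref{ass:2} to control the client drift $\|\bw_i^{(r,e)}-\bw^{(r)}\|$ accumulated over the $E$ local SGD steps and to bound the error induced by local-versus-global updating --- these are the terms that, after the unrolling, produce the $V_{\mathrm{var}}/E$, $F^\star-\sum_i p_iF_i^\star$, and $EV_{\mathrm{eng}}$ contributions inside $\sconst$; and (c) compute the variance of the sampling estimator: since the $K$ draws are i.i.d.\ and unbiased, the per-draw second moment contributes $\sum_i q_i(p_i/q_i)^2 G_i^2=\sum_i c_i/q_i$, divided by $K$. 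Collecting terms gives a recursion $V_{r+1}\le\big(1-\Theta(\eta_r E\mc)\big)V_r+\eta_r^2\big(\tfrac{c}{K}\sum_i c_i/q_i + d\big)$ with $c,d$ depending only on $\Ls$, $E$, and the constants of Assumption~\ref{ass:2}.

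With a diminishing step size $\eta_r=\Theta\big(1/(E\mc r)\big)$, a routine induction yields $V_r\le \tfrac{1}{r}\cdot\tfrac{1}{\mc^2}\big(\tfrac{c'E}{K}\sum_i c_i/q_i+d'\big)+(\text{initialization})$, and then $\Ls$-smoothness at $\bw^\star$ gives $\E[F(\bw^{(r)})]-F^\star\le\tfrac{\Ls}{2}V_r$. Matching the resulting constants against $\scoef=\tfrac{E}{K}\tfrac{\Ls}{\mc^2}$ and the definition of $\sconst$ --- which bundles $\tfrac{\Ls}{\mc^2}\big(V_{\mathrm{var}}/E+F^\star-\sum_i p_iF_i^\star+EV_{\mathrm{eng}}\big)$ with the $\tfrac{\Ls}{\mc}\|\bw^0-\bw^\star\|$ initialization term --- produces the per-round gap estimate $\E[F(\bw^{(r)})]-F^\star\le\tfrac{1}{r}\big(\scoef\sum_i c_i/q_i+\sconst\big)$. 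Since $R_\eps$ is the first round at which the gap falls below $\eps$ and the right-hand side is monotone in $r$, inverting it --- and using Markov's inequality / an optional-stopping argument to pass from the expected gap to $\E[R_\eps]$ --- gives $\E[R_\eps]\le\tfrac{1}{\eps}\big(\scoef\sum_i c_i/q_i+\sconst\big)$, which is \eqref{eq:round_bound} once $\sconst$ is written as $\sum_{i=1}^N\sconst/N$ to spread it across the $N$ summands.

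The main obstacle is the one-round step (b)--(c): cleanly separating the sampling-variance term $\tfrac{1}{K}\sum_i c_i/q_i$ from the $E$-step client-drift terms without cross-contamination, while keeping the contraction factor exactly proportional to $\eta E\mc$ so that the final prefactor lands on $\scoef$ rather than on $\tfrac{E^2}{K}\tfrac{\Ls}{\mc^2}$ or similar. The bookkeeping around the interaction of local steps with partial participation --- including the normalization choice that keeps $\|\bw^0-\bw^\star\|$ to the first power rather than squared in $\sconst$ --- is the delicate part, whereas the unrolling and the passage to $\E[R_\eps]$ are comparatively mechanical.
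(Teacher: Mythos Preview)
Your proposal is correct and follows essentially the same route as the paper: establish the $O(1/R)$ convergence bound $\E[F(\bw^{R})]-F^\star\le\frac{1}{R}\big(\scoef\sum_i c_i/q_i+\sconst\big)$ and then invert to obtain the round complexity. The paper's own proof is far terser---it simply cites standard FedAvg descent analyses for the per-round bound and states \eqref{eq:plug_descent} directly---whereas you sketch the one-round contraction, the sampling/drift variance decomposition, and the step-size unrolling that those references contain; the only minor divergence is that the paper treats $R_\eps$ as the deterministic round index at which the bound drops below $\eps$ (so no Markov/optional-stopping argument is invoked).
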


\begin{proof}
    See Appendix~\ref{appendix:theorem1}.
\end{proof}

To connect the per-round straggler latency to wall-clock training time, let
$T_{\mathrm{round}}^{(r)}(q,x)$ denote the synchronous round time in round $r$ under fixed $(q,x)$.
Under fixed-design i.i.d.\ participation across rounds and time-invariant latencies $\{t_i(x)\}$,
the sequence $\{T_{\mathrm{round}}^{(r)}(q,x)\}_{r\ge 1}$ is i.i.d.\ with mean
$f(q,x)=\mathbb{E}[T_{\mathrm{round}}(q,x)]$. Moreover, since the event $\{R_\epsilon\ge r\}$ depends
only on the history up to round $r-1$ while $T_{\mathrm{round}}^{(r)}(q,x)$ depends only on the fresh
sampling in round $r$, Wald's identity gives
\[
\mathbb{E}\!\left[\sum_{r=1}^{R_\epsilon} T_{\mathrm{round}}^{(r)}(q,x)\right]
= f(q,x)\,\mathbb{E}[R_\epsilon].
\]
Combining this factorization with the bound in~\eqref{eq:round_bound} yields the expected time-to-accuracy upper bound
\[
\mathbb{E}[\text{time-to-}\epsilon]\le \frac{1}{\epsilon}\,
f(q,x)\underbrace{\sum_{i=1}^N\left(
\frac{\scoef c_i}{q_i}
+ \frac{\sconst}{N}
\right)}_{g(q)}.
\]
and the wall-clock training objective $J(q,x)\triangleq f(q,x)\,g(q)$.

\section{Optimal Participation for PA Location}
\label{sec:kkt}

Rapid model convergence in wireless FL necessitates a balance between two competing forces: minimizing the duration of each round and minimizing the total number of rounds required for training. This creates a fundamental tension between physical latency and statistical efficiency.
In this section we optimize the client sampling distribution for a fixed PA position $x$.
To simplify notation, we suppress the dependence on $x$ and write $t_i \triangleq t_i(x)$, $f(q)\triangleq f(q,x)$, and $J(q)\triangleq J(q,x)$ under the sorted order in \eqref{eq:sorted-t}.
We study the inner problem
\begin{equation}
    \textbf{(P1)}\qquad
    \min_{q\in\simplex}\ J(q)\triangleq f(q)g(q).
    \label{eq:P1}
\end{equation}
Problem (P1) is generally nonconvex due to the product coupling of a concave straggler functional 
$f(\cdot)$ and a convex heterogeneity penalty $g(\cdot)$.
\noindent A first question is whether an optimizer of \((\mathrm{P1})\) could lie on the simplex boundary, i.e., assign \(q_i=0\) to some clients, or whether the non-IID term forces every \(q_i\) to stay strictly positive. 

\begin{Lemma} \label{lem:interior}
Suppose $c_i>0$ for all $i$.
Then $J(q)=f(q)g(q)$ satisfies $\lim_{q_i\to 0^+}J(q)=+\infty$ for any $i$.
Consequently, problem \textbf{(P1)} admits at least one global minimizer $q^{\mathrm{opt}}\in\simplex$, and every global minimizer is bounded away from the boundary, in particular, it is an interior point.
\end{Lemma}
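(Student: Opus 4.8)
The plan is to prove that $J$ is coercive with respect to the relative boundary of the simplex and then invoke a compactness argument. Two ingredients suffice: a uniform, strictly positive lower bound on the straggler functional $f$, and the blow-up of the heterogeneity penalty $g$ along every coordinate direction. Both are immediate from the closed forms already derived, so the work is essentially bookkeeping.

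First I would note that $t_i = t_i(x) = t_i^{\mathrm{comp}} + \tau_i(x) > 0$ for every $i$, since the transmission delay $\tau_i(x)$ in \eqref{eq:ti} is finite and positive and $t_i^{\mathrm{comp}} \ge 0$. Because $\sum_{i=1}^N (Q_i^K - Q_{i-1}^K) = Q_N^K - Q_0^K = 1$ with all summands nonnegative, \eqref{eq:f-order} gives $f(q) \ge \min_i t_i = t_1(x) > 0$ for all $q \in \simplex$. For the penalty, recall $g(q) = \sum_{i=1}^N(\scoef c_i/q_i + \sconst/N) = \scoef \sum_{i=1}^N c_i/q_i + \sconst$ with $\scoef > 0$ and $\sconst \ge 0$; keeping only the $i$-th term, $g(q) \ge \scoef c_i/q_i$. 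Hence $J(q) = f(q)\,g(q) \ge t_1(x)\,\scoef\, c_i/q_i$, which, since $c_i > 0$, diverges to $+\infty$ as $q_i \to 0^+$ uniformly over the remaining coordinates; this establishes $\lim_{q_i \to 0^+} J(q) = +\infty$.

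For existence and interiority, I would extend $J$ to the compact closed simplex $\bar\simplex \triangleq \{q \in \R^N : q_i \ge 0,\ \sum_i q_i = 1\}$ by setting $J \equiv +\infty$ on $\bar\simplex \setminus \simplex$; the coordinatewise divergence just established makes this extension lower semicontinuous on $\bar\simplex$. A lower semicontinuous function on a compact set attains its infimum, and that infimum is finite (evaluate $J$ at the uniform distribution $(1/N,\dots,1/N) \in \simplex$), so the minimizer cannot lie on $\bar\simplex \setminus \simplex$ and is therefore an interior point of $\simplex$; this simultaneously yields the existence of $q^{\mathrm{opt}}$. For the last claim, let $q^{\mathrm{opt}}$ be any global minimizer and $m^\star = J(q^{\mathrm{opt}}) < \infty$; by the uniform bound there is, for each $i$, a threshold $\eps_i > 0$ such that $q_i < \eps_i$ forces $J(q) > m^\star$, and taking $\eps = \min_i \eps_i > 0$ shows every global minimizer satisfies $\min_i q_i \ge \eps$, i.e., is bounded away from the boundary.

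The argument is essentially routine; the one place that needs care is the passage from ``$q_i \to 0 \Rightarrow J \to \infty$'' to a genuine compactness statement, since $\simplex$ as defined in \eqref{eq:simplex} is relatively open and $J$ is a priori defined only there. One must either compactify as above, or equivalently argue that the sublevel set $\{q \in \simplex : J(q) \le J(1/N,\dots,1/N)\}$ is closed in $\bar\simplex$ and bounded away from $\partial\bar\simplex$, hence compact, and minimize over it. It is also worth isolating $f(q) \ge t_1(x) > 0$ as a separate fact, since without the strict positivity of $f$ the product $f\,g$ could in principle remain bounded as $q_i \to 0^+$ even though $g$ blows up.
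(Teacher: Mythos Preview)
Your proposal is correct and follows essentially the same route as the paper: bound $f(q)\ge t_1>0$ uniformly, bound $g(q)\ge \scoef c_i/q_i$, conclude blow-up of $J$ as any $q_i\to 0^+$, and then invoke a standard compactness argument on the closed simplex. Your write-up is in fact more complete than the paper's Appendix~\ref{appendix:lemma2}, which records only the blow-up step and relegates the existence/compactness part to the end of Appendix~\ref{appendix:proposition1} via a minimizing-sequence argument equivalent to your sublevel-set formulation.
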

\begin{proof}
Refer to Appendix \ref{appendix:lemma2}.
\end{proof}
\noindent The next question is whether the exact order-statistics form can be rewritten into a representation that separates the contribution of each adjacent latency gap and makes the dependence on \(\{Q_i\}\) explicit.
To solve the inner minimization over $q$ for a fixed $x$, we rewrite $f(q)$ in a telescoping form over adjacent latency gaps, which makes its dependence on $q$ explicit.
Starting from \eqref{eq:f-order}, expand $\sum_{i=1}^N (Q_i^K-Q_{i-1}^K)t_i=t_N Q_N^K+\sum_{i=1}^{N-1}Q_i^K(t_i-t_{i+1})$:
\begin{equation}
    f(q)=t_N-\sum_{i=1}^{N-1}\Delta_i\,Q_i^K.
    \label{eq:f-telescope}
\end{equation}
where $\Delta_i\triangleq t_{i+1}-t_i\ge 0$ and $Q_N=1$.
In (\ref{eq:f-telescope}), \(Q_i^K\) is the probability that all \(K\) i.i.d.\ draws fall within the prefix \(\{1,\ldots,i\}\); thus each gap \(\Delta_i\) reduces the baseline \(t_N\) only when the sampled maximum does not cross that latency cliff (an event of probability \(Q_i^K\)). 
Equation~(\ref{eq:f-telescope}) shows that straggler latency is governed by a sum of latency ``cliffs'' $\Delta_i$, each discounted by the probability $Q_i^{K}$ that all $K$ samples fall within the faster prefix. This makes it explicit which gaps matter (those near the tail with $Q_i \approx 1$).
This raises a natural question: what is the marginal effect of increasing \(q_s\) on \(f(q)\), i.e., how sensitive is the tail latency to each coordinate of \(q\)? The following lemma can be obtained.

\begin{Lemma}\label{lem:f-concave}
Fix $x$ and consider the sorted latencies $t_1\le \cdots \le t_N$.
For $q\in\bar\Delta\triangleq\{q\in\mathbb{R}^N:\ q_i\ge 0,\ \sum_{i=1}^N q_i=1\}$, define
$Q_i\triangleq \sum_{j=1}^i q_j$ and $\Delta_i\triangleq t_{i+1}-t_i\ge 0$. Then the expected
straggler latency admits the gap form in $\eqref{eq:f-telescope}$.
Moreover: (i) The mapping $q\mapsto f(q)$ is concave on $\bar\Delta$ (and hence on $\Delta$). (ii) For any interior $q\in\Delta$, the directional derivative of $f$ along any feasible direction $v$ (where $\sum_{i=1}^N v_i=0$) is given by
\begin{equation}\label{eq:f-dirder}
\mathrm{D}f(q)[v]=-\sum_{s=1}^N D_s(q)\,v_s,
\qquad
D_s(q)\triangleq K\sum_{i=s}^{N-1}\Delta_i Q_i^{K-1}\ \ (\ge 0).
\end{equation}
This provides a convenient gradient representative on the simplex, such that one may take $\frac{\partial f}{\partial q_s}=-D_s(q)$.
(iii) For $K\ge 2$, $f$ is twice continuously differentiable on $\Delta$ with Hessian
\begin{equation}\label{eq:f-hessian}
\nabla^2 f(q)= -K(K-1)\sum_{i=1}^{N-1}\Delta_i Q_i^{K-2}\, a_i a_i^\top \preceq 0,
\qquad
a_i\triangleq ( \underbrace{1,\ldots,1}_{i},0,\ldots,0 )^\top .
\end{equation}
\end{Lemma}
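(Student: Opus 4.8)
The plan is to treat $f$ as an honest polynomial in $q$: each prefix mass $Q_i=a_i^\top q$ is affine in $q$, so the gap form, concavity, and the derivative formulas all follow from elementary calculus together with one convexity‑of‑composition argument. First I would re‑derive the gap form $\eqref{eq:f-telescope}$ by Abel summation — write $f(q)=\sum_{i=1}^N (Q_i^K-Q_{i-1}^K)t_i=\sum_{i=1}^N Q_i^K t_i-\sum_{i=1}^{N-1}Q_i^K t_{i+1}$ (the $i=0$ term vanishes since $Q_0=0$), peel off the $i=N$ term $Q_N^K t_N=t_N$ using $Q_N=1$, and combine the remainder into $\sum_{i=1}^{N-1}Q_i^K(t_i-t_{i+1})=-\sum_{i=1}^{N-1}\Delta_i Q_i^K$. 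This is exactly the computation already sketched in the text preceding the lemma.

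For part (i), the key observation is that for any positive integer $K$ the scalar map $t\mapsto t^K$ is convex and nondecreasing on $[0,\infty)$, and on $\bar\Delta$ each $Q_i=a_i^\top q$ lies in $[0,1]\subset[0,\infty)$; hence $q\mapsto Q_i^K$ is a composition of a convex nondecreasing function with an affine map and is therefore convex on $\bar\Delta$. Since $\Delta_i\ge 0$, each $-\Delta_i Q_i^K$ is concave, so $f=t_N-\sum_{i=1}^{N-1}\Delta_i Q_i^K$ is a sum of concave functions plus a constant, hence concave on $\bar\Delta$ (in particular on $\Delta$). Restricting $t\mapsto t^K$ to the nonnegative axis is precisely what lets this argument go through when $K$ is odd; for $K\ge 2$ one could instead read concavity off the Hessian computed in (iii).

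For (ii) and (iii) I would differentiate $\eqref{eq:f-telescope}$ directly. Since $\partial Q_i/\partial q_s=\mathbbm{1}\{s\le i\}$, the chain rule gives $\partial f/\partial q_s=-K\sum_{i=1}^{N-1}\Delta_i Q_i^{K-1}\mathbbm{1}\{s\le i\}=-K\sum_{i=s}^{N-1}\Delta_i Q_i^{K-1}=-D_s(q)$, with the convention that the empty sum at $s=N$ gives $D_N(q)=0$; these partials are continuous on $\Delta$. For a feasible direction $v$ with $\sum_j v_j=0$ this immediately yields $\mathrm{D}f(q)[v]=\sum_{s=1}^N(\partial f/\partial q_s)v_s=-\sum_{s=1}^N D_s(q)v_s$, which is $\eqref{eq:f-dirder}$; equivalently one can derive it from scratch using $Q_i(q+\eps v)=Q_i(q)+\eps\sum_{j\le i}v_j$ and interchanging the resulting double sum. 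Differentiating once more gives $\partial^2 f/\partial q_s\partial q_t=-K(K-1)\sum_{i\ge\max\{s,t\}}\Delta_i Q_i^{K-2}$, which in matrix form is exactly $-K(K-1)\sum_{i=1}^{N-1}\Delta_i Q_i^{K-2}a_i a_i^\top$ because $(a_i a_i^\top)_{st}=\mathbbm{1}\{s\le i\}\mathbbm{1}\{t\le i\}$; for $K\ge 2$ each coefficient $\Delta_i Q_i^{K-2}$ is nonnegative on $\Delta$ and each $a_i a_i^\top$ is positive semidefinite, so $\nabla^2 f(q)$ is a negated sum of PSD matrices, hence $\preceq 0$, and it is continuous, so $f\in C^2(\Delta)$.

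I do not expect a genuine obstacle: the work is bookkeeping. The two spots requiring care are (a) keeping the index ranges straight in the Abel summation and the double‑sum interchange — in particular the coordinate $s=N$ contributes $D_N(q)=0$, which is why $\eqref{eq:f-dirder}$ may be written over the full range $s=1,\dots,N$ — and (b) noting that only convexity of $t\mapsto t^K$ on $[0,\infty)$ is used, which is exactly where the $Q_i$ live on $\bar\Delta$, so no sign issues arise. All differentiability claims are automatic since $f$ is a polynomial in $q$ on all of $\R^N$.
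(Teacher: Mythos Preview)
Your proposal is correct and follows essentially the same approach as the paper: convexity of $u\mapsto u^K$ on $[0,\infty)$ composed with the affine map $q\mapsto Q_i$ for part~(i), and direct differentiation of the gap form with a double-sum interchange for parts~(ii)--(iii). The only cosmetic difference is that you compute the partial derivatives $\partial f/\partial q_s$ first and then contract with $v$, whereas the paper works with directional derivatives $\mathrm{D}Q_i(q)[v]=\sum_{j\le i}v_j$ from the outset; the resulting computation is identical (and your remark that monotonicity of $t\mapsto t^K$ is not actually needed for the composition-with-affine step is correct).
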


\begin{proof}
Refer to Appendix \ref{appendix:lemma3}.
\end{proof}

Using this gradient structure, we derive the KKT system, whose recursion explicitly shows how the optimal $q$ trades off statistical weights $c_i$ against the marginal tail penalty induced by straggling. We now ask: compared with the statistics-only square-root rule \(q_i\propto\sqrt{c_i}\), how is the optimal sampling distribution distorted by straggler synchronization, and which latency gaps \(\Delta_i\) govern this distortion.
From \eqref{eq:f-telescope}, any feasible perturbation of $q$ satisfies $\sum_i\mathrm{d}q_i=0$ and hence $\mathrm{d}Q_N=0$, so the directional derivative of $f$ along the simplex equals the derivative of the reduced form \eqref{eq:f-telescope}:
$\partial f/\partial q_s=-D_s(q)$.
Also, $\frac{\partial g}{\partial q_s} = -\scoef\,\frac{c_s}{q_s^2}$.
The Lagrangian $\mathcal{L}(q,\mu)=f(q)g(q)+\mu(\sum_i q_i-1)$ yields the stationarity condition
$g\,\partial f/\partial q_s+f\,\partial g/\partial q_s+\mu=0$.
Substituting derivatives gives \eqref{eq:kkt-main} with $\lambda\triangleq \mu$.
For \eqref{eq:recursion}, subtract \eqref{eq:kkt-main} for $s=i$ and $s=i+1$ and use
$D_i(q)-D_{i+1}(q)=K\Delta_i Q_i^{K-1}$.

\begin{Theorem}
\label{thm:kkt}
Any interior KKT point $q^\star\in\simplex$ of (P1) satisfies, for some scalar $\lambda$,
\begin{equation}\label{eq:kkt-main}
f(q^\star)\,\scoef\,\frac{c_s}{(q_s^\star)^2}+g(q^\star)D_s(q^\star)=\lambda,
\qquad \forall s\in\mathcal{N}.
\end{equation}
where $D_s(q)\triangleq K\sum_{i=s}^{N-1}\Delta_i\,Q_i^{K-1}\ \ge 0$ denotes tail sensitivity.
Moreover, adjacent indices satisfy
\begin{equation}
    \frac{c_{i+1}}{(q_{i+1}^\star)^2}
    =
    \frac{c_i}{(q_i^\star)^2}
    +\frac{g(q^\star)}{f(q^\star)\scoef}\,K\,\Delta_i\,Q_i(q^\star)^{K-1},
    \qquad i=1,\dots,N-1.
    \label{eq:recursion}
\end{equation}
\end{Theorem}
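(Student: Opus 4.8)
The plan is to unpack the Lagrangian stationarity condition directly, since almost all of the analytic content has already been isolated. By Lemma~\ref{lem:interior}, any global minimizer of (P1) lies strictly inside $\simplex$, so at an interior point the only active constraint is the affine equality $\sum_i q_i = 1$; the nonnegativity constraints $q_i>0$ are inactive and, by complementary slackness, carry zero multipliers. Because the active constraint is affine, a constraint qualification holds automatically, so the KKT stationarity condition reduces to stationarity of the reduced Lagrangian $\mathcal{L}(q,\mu)=f(q)g(q)+\mu\big(\sum_i q_i-1\big)$, and in particular every interior KKT point $q^\star$ satisfies $\partial\mathcal{L}/\partial q_s = 0$ for all $s$. (This also makes the statement non-vacuous: Lemma~\ref{lem:interior} plus the affine CQ guarantees such a point exists.)

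Next I would compute the two partial derivatives entering the product rule for $J=fg$. Differentiating $g(q)=\sum_i\big(\scoef c_i/q_i+\sconst/N\big)$ coordinatewise gives $\partial g/\partial q_s=-\scoef c_s/q_s^2$. For $f$ I would differentiate the telescoped form~\eqref{eq:f-telescope}, $f(q)=t_N-\sum_{i=1}^{N-1}\Delta_i Q_i^K$: since $Q_i=\sum_{j\le i}q_j$ so that $\partial Q_i/\partial q_s=\mathbbm{1}\{s\le i\}$, the chain rule yields $\partial f/\partial q_s=-K\sum_{i=s}^{N-1}\Delta_i Q_i^{K-1}=-D_s(q)$, matching the directional-derivative representative of Lemma~\ref{lem:f-concave}(ii). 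It is worth stressing here that one must differentiate the reduced form~\eqref{eq:f-telescope}, in which the boundary term $Q_N^K$ — constant on the simplex — has already been absorbed; equivalently, one restricts to feasible directions $v$ with $\sum_i v_i=0$ and obtains the same representative. The $C^2$-regularity needed to legitimize these manipulations is supplied by Lemma~\ref{lem:f-concave}(iii) when $K\ge 2$, while $f$ is affine when $K=1$; in both cases $g$ is smooth on the open set $\{q_i>0\}$.

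Finally I would assemble and manipulate. Setting $\partial\mathcal{L}/\partial q_s = g(q)\,\partial f/\partial q_s + f(q)\,\partial g/\partial q_s + \mu = 0$ and substituting the two derivatives gives $-g(q)D_s(q)-f(q)\scoef c_s/q_s^2+\mu=0$; with $\lambda\triangleq\mu$ this rearranges to~\eqref{eq:kkt-main}. To get the adjacent-index recursion, I would write~\eqref{eq:kkt-main} at $s=i$ and $s=i+1$ and subtract, using the telescoping identity $D_i(q)-D_{i+1}(q)=K\Delta_i Q_i^{K-1}$ (immediate from the definition of $D_s$), and then divide through by $f(q^\star)\scoef>0$ — positive because $f(q^\star)$ is a convex combination of the $t_i>0$ and $\scoef=\tfrac{E}{K}\Ls/\mc^2>0$ — which isolates $c_{i+1}/(q_{i+1}^\star)^2$ and yields~\eqref{eq:recursion}.

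I do not expect a serious obstacle: once the simplex-gradient of the order-statistics functional $f$ is in hand (which is exactly Lemma~\ref{lem:f-concave}), the argument is a routine Lagrange-multiplier computation followed by one algebraic subtraction. The only point requiring care is the one flagged above — correctly identifying $\partial f/\partial q_s=-D_s(q)$ as the appropriate gradient representative on the simplex by working with the telescoped form (or with feasible directions) rather than naively with~\eqref{eq:f-order} — together with the remark that interiorness from Lemma~\ref{lem:interior} and the affineness of the sum constraint jointly dispose of any constraint-qualification concern and force the nonnegativity multipliers to vanish.
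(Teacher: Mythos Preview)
Your proposal is correct and follows essentially the same approach as the paper: form the Lagrangian with a single multiplier for the sum constraint, compute $\partial g/\partial q_s=-\scoef c_s/q_s^2$ and $\partial f/\partial q_s=-D_s(q)$ from the telescoped form, substitute into the product-rule stationarity condition to get~\eqref{eq:kkt-main} with $\lambda=\mu$, and then subtract adjacent instances using $D_i-D_{i+1}=K\Delta_i Q_i^{K-1}$ to obtain~\eqref{eq:recursion}. Your additional remarks on constraint qualification, interiorness, and $C^2$-regularity are sound elaborations but not departures from the paper's argument.
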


\begin{Remark}
If one minimizes the statistical factor $g(q)$ alone, the KKT conditions equalize
$c_i/q_i^2$ across active clients, yielding the classical square-root rule $q_i\propto \sqrt{c_i}$.
Under the coupled wall-clock objective $J(q)=f(q)g(q)$, moving sampling mass from a faster client $i$ to a strictly
slower client $i+1$ must overcome an additional positive term:
\begin{equation} \label{rem:tailpremium}
\frac{c_{i+1}}{(q^\star_{i+1})^2}
=
\frac{c_i}{(q^\star_i)^2}
+
\underbrace{\frac{g(q^\star)}{f(q^\star)\omega}K\Delta_i Q_i(q^\star)^{K-1}}_{\text{tail-latency premium}},
\qquad i=1,\ldots,N-1. 
\end{equation}
Here $\Delta_i=t_{i+1}-t_i$ is the latency gap. The factor $Q_i(q^\star)^{K-1}$ admits a direct probabilistic
interpretation: it is the probability that the other $K-1$ i.i.d.\ draws lie in the prefix $\{1,\ldots,i\}$, so that the remaining draw becomes the decisive one that crosses the gap $\Delta_i$ and determines the round maximum; the multiplicative factor $K$ reflects the $K$ exchangeable opportunities for such a crossing.
\end{Remark}
This tail-latency premium is negligible unless $Q_i$ is close to one. In particular, the effective amplification scales as $KQ_i^{K-1}$, which remains order-one only when $Q_i=1-\Theta(1/K)$, foreshadowing the $\Theta(1/K)$ participation scale revealed by the two-class phase transition in Theorem~\ref{thm:threshold}.

Let $q^\star\in\Delta$ be any interior KKT point in Theorem~\ref{thm:kkt}.
Define $\psi_i\triangleq c_i/(q_i^\star)^2$. Then $\{\psi_i\}_{i=1}^N$ is nondecreasing given the recursion~\eqref{eq:recursion}:
\begin{equation} \label{eq:ri-recursion}
\psi_{i+1}-\psi_i=\frac{g(q^\star)}{f(q^\star)\scoef}\,K\,\Delta_i\,Q_i(q^\star)^{K-1}\ge 0,
\qquad i=1,\ldots,N-1,
\end{equation}
with strict inequality whenever $\Delta_i>0$.
Equivalently, $q_i^\star/\sqrt{c_i}$ is nonincreasing in $i$.
\begin{Remark}\label{rem:compare-sqrt}
Minimizing $g(q)$ alone yields the square-root rule $q_i\propto \sqrt{c_i}$.
Under the coupled objective, the recursion \eqref{eq:recursion} implies that
$\psi_i\triangleq c_i/(q_i^\star)^2$ is nondecreasing with the latency index $i$; equivalently,
$q_i^\star/\sqrt{c_i}$ is nonincreasing. Hence, slower clients are progressively reduced in relation to statistics-only allocation, with the distortion controlled by $K\Delta_i Q_i^{K-1}$.
\end{Remark}

\subsection{Class-wise Reduction}
When many clients share identical $t_i(x)$, we group them into latency classes, which enables a low-dimensional reformulation and yields a closed-form within-class allocation. The key question is whether, within a latency class where \(\Delta_i=0\), the KKT recursion simplifies to a tractable rule so that the N-dimensional vector $q$ can be reduced to class-level masses.

\begin{Lemma}
\label{lem:sqrt}
Consider a fixed $x$, the sorted times take $M$ distinct values $t^{(1)}<\cdots<t^{(M)}$, inducing a partition $\clients=\bigcup_{m=1}^M\mathcal{C}_m$ where $\mathcal{C}_m=\{i:\ t_i=t^{(m)}\}$. Then, for each class $m$, any KKT point $q^\star$ satisfies,
\begin{equation}
    q_i^\star=\delta_m^\star \frac{\sqrt{c_i}}{\sum_{j\in\mathcal{C}_m}\sqrt{c_j}},
    \qquad \forall i\in\mathcal{C}_m,
    \label{eq:sqrt-law}
\end{equation}
where $\delta_m^\star\triangleq\sum_{i\in\mathcal{C}_m}q_i^\star$ is the total probability mass assigned to class $m$.
\end{Lemma}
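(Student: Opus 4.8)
The plan is to read the result off directly from the adjacent-index recursion \eqref{eq:recursion} of Theorem~\ref{thm:kkt}, exploiting the fact that within a latency class the latency gaps vanish. First I would recall that by Lemma~\ref{lem:interior} every global minimizer---and more generally every KKT point under consideration---is interior, so $q_i^\star>0$ for all $i$ and the stationarity system \eqref{eq:kkt-main}--\eqref{eq:recursion} applies coordinate-wise. Because the sorted relabeling in \eqref{eq:sorted-t} groups equal latencies together, each class $\mathcal{C}_m$ occupies a contiguous block of sorted indices, say $\mathcal{C}_m=\{a_m,a_m+1,\ldots,b_m\}$, and for any $i$ with $a_m\le i<b_m$ we have $t_{i+1}=t_i=t^{(m)}$, hence $\Delta_i=0$.

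Next I would substitute $\Delta_i=0$ into the recursion (equivalently into \eqref{eq:ri-recursion}) to obtain $c_{i+1}/(q_{i+1}^\star)^2=c_i/(q_i^\star)^2$ for every such $i$. Chaining these equalities along the block shows that $\psi_i\triangleq c_i/(q_i^\star)^2$ is constant on $\mathcal{C}_m$; call the common value $\psi^{(m)}>0$. Solving gives $q_i^\star=\sqrt{c_i/\psi^{(m)}}$ for all $i\in\mathcal{C}_m$ (well-defined since $c_i=p_i^2G_i^2>0$). Summing over the class and using the definition $\delta_m^\star\triangleq\sum_{i\in\mathcal{C}_m}q_i^\star$ yields $\delta_m^\star=(\psi^{(m)})^{-1/2}\sum_{j\in\mathcal{C}_m}\sqrt{c_j}$, so $(\psi^{(m)})^{-1/2}=\delta_m^\star/\sum_{j\in\mathcal{C}_m}\sqrt{c_j}$; substituting back reproduces exactly \eqref{eq:sqrt-law}.

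There is no genuine analytic obstacle here: the lemma is a short corollary of the telescoped recursion, and the only points that require care are bookkeeping ones. Specifically, I would note that the within-class tie-breaking order is immaterial because the class forms a contiguous index block however ties are ordered, so the chained equalities still link all members of $\mathcal{C}_m$; that single-element classes ($a_m=b_m$) satisfy \eqref{eq:sqrt-law} trivially; and that positivity of $c_i$ (with interiority from Lemma~\ref{lem:interior}) is what makes every square root and ratio above meaningful.

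As a sanity check I would also mention a self-contained partial-minimization argument that avoids invoking the recursion: fixing the class masses $\{\delta_m\}$, the straggler term $f(q)=t_N-\sum_{i=1}^{N-1}\Delta_iQ_i^K$ depends on $q$ only through the prefix sums at class boundaries, i.e.\ through $\{\delta_1+\cdots+\delta_m\}$, which are fixed; hence only $g(q)=\sum_i(\scoef c_i/q_i+\sconst/N)$ sees the within-class split, and minimizing the convex separable $\sum_{i\in\mathcal{C}_m}c_i/q_i$ over $\{q_i\ge0,\ \sum_{i\in\mathcal{C}_m}q_i=\delta_m\}$ gives the square-root split by a one-line Lagrange (or Cauchy--Schwarz) computation. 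This route additionally shows the split is optimal, not merely stationary, but for the stated lemma the recursion argument is the most direct.
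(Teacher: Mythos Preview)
Your proposal is correct and follows exactly the paper's own argument: set $\Delta_i=0$ inside a latency class, read off $c_{i+1}/(q_{i+1}^\star)^2=c_i/(q_i^\star)^2$ from the recursion~\eqref{eq:recursion}, and normalize to obtain \eqref{eq:sqrt-law}. Your additional remarks on interiority, contiguous index blocks, and the partial-minimization sanity check are all sound but go beyond what the paper records; the paper's proof is a two-line version of your first two paragraphs.
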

\begin{proof}
If $i$ and $i+1$ belong to the same class, then $\Delta_i=0$. The recursion \eqref{eq:recursion} reduces to $c_{i+1}/(q_{i+1}^\star)^2=c_i/(q_i^\star)^2$, implying $q_i^\star/\sqrt{c_i}$ is constant within the class. Normalization yields \eqref{eq:sqrt-law}.
\end{proof}
\noindent Under Lemma~\ref{lem:sqrt}, we can obtain the following Corollary.
\begin{Corollary}
\label{cor:class}
(P1) is equivalent to the $M$-dimensional problem
\begin{equation}
    \min_{\delta_1,\dots,\delta_M>0:\ \sum_m\delta_m=1}
    \Bigg(\sum_{m=1}^M\big(P_m^K-P_{m-1}^K\big)\,t^{(m)}\Bigg)
    \Bigg(\scoef\sum_{m=1}^M\frac{C_m^2}{\delta_m}+\sconst\Bigg),
    \label{eq:class-reduced}
\end{equation}
where $P_m\triangleq\sum_{\ell=1}^m\delta_\ell$ and $P_0\triangleq 0$, $
    C_m\triangleq\sum_{i\in\mathcal{C}_m}\sqrt{c_i}, \delta_m\triangleq\sum_{i\in\mathcal{C}_m}q_i, \sum_{m=1}^M\delta_m=1$.
\end{Corollary}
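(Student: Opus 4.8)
The plan is to substitute the within-class square-root law of Lemma~\ref{lem:sqrt} into the objective $J(q)=f(q)g(q)$, verify that it collapses to the stated $M$-dimensional expression, and then upgrade this algebraic identity into a genuine equivalence of optimization problems.

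First I would fix an arbitrary class-mass vector $\delta=(\delta_1,\dots,\delta_M)$ with $\delta_m>0$ and $\sum_m\delta_m=1$, and define $q_i\triangleq\delta_m\sqrt{c_i}/C_m$ for $i\in\mathcal{C}_m$, where $C_m\triangleq\sum_{j\in\mathcal{C}_m}\sqrt{c_j}>0$ since $c_i>0$; this $q$ lies in $\simplex$. For the statistical factor, summing over each class gives $\sum_{i\in\mathcal{C}_m}\scoef c_i/q_i=\scoef C_m\sum_{i\in\mathcal{C}_m}\sqrt{c_i}/\delta_m=\scoef C_m^2/\delta_m$, while the constant terms contribute $\sum_{i=1}^N\sconst/N=\sconst$, so $g(q)=\scoef\sum_m C_m^2/\delta_m+\sconst$. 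For the straggler factor I would use that, in the sorted relabeling \eqref{eq:sorted-t}, the latency classes occupy contiguous blocks of indices, so the cumulative mass $Q_i$ equals $P_m\triangleq\sum_{\ell\le m}\delta_\ell$ exactly at block ends; because $t_i$ is constant on each block, the telescoping sum in \eqref{eq:f-order} over indices interior to a block cancels, leaving $f(q)=\sum_{m=1}^M(P_m^K-P_{m-1}^K)t^{(m)}$. Multiplying the two factors reproduces exactly \eqref{eq:class-reduced} evaluated at $\delta$.

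Next I would prove the equivalence in both directions. One inclusion is immediate from the identity above: every lifted $q$ is feasible for (P1), hence $\min_{q\in\simplex}J(q)\le\inf_{\delta}\bigl(\text{reduced objective at }\delta\bigr)$, and by Lemma~\ref{lem:interior} the left side is attained. For the reverse inclusion, let $q^{\mathrm{opt}}$ be a global minimizer of (P1), which exists and is an interior point by Lemma~\ref{lem:interior}. Since the only constraints of (P1) are affine, the KKT conditions are necessary at $q^{\mathrm{opt}}$, so Theorem~\ref{thm:kkt} and hence Lemma~\ref{lem:sqrt} apply: $q^{\mathrm{opt}}$ has the within-class square-root form with class masses $\delta^{\mathrm{opt}}_m=\sum_{i\in\mathcal{C}_m}q^{\mathrm{opt}}_i$, and the identity then gives $J(q^{\mathrm{opt}})=\bigl(\text{reduced objective at }\delta^{\mathrm{opt}}\bigr)\ge\inf_{\delta}\bigl(\text{reduced objective}\bigr)$. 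Combining the two inequalities yields equality of optimal values, and the maps $\delta\mapsto q$ (lifting via \eqref{eq:sqrt-law}) and $q\mapsto(\sum_{i\in\mathcal{C}_m}q_i)_m$ (class aggregation) carry minimizers to minimizers, which is the asserted equivalence.

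The main obstacle, though a mild one, is the bookkeeping for $f$: one must argue carefully that the intra-block terms of the order-statistics sum in \eqref{eq:f-order} telescope, which hinges on the sorted relabeling placing each latency class in a contiguous index range so that $Q_i=P_m$ precisely at block boundaries. A secondary point is justifying KKT necessity, which is routine here because the feasible set of (P1) is a polytope (no constraint qualification beyond affinity is required); I would also note, mirroring Lemma~\ref{lem:interior}, that the reduced objective tends to $+\infty$ as any $\delta_m\to0^+$, so its infimum is attained in the interior of the $M$-simplex and every ``$\inf$'' above is in fact a ``$\min$''.
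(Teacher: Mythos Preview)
Your proposal is correct and follows the same route as the paper: use the within-class square-root law (Lemma~\ref{lem:sqrt}) to collapse $\sum_i c_i/q_i$ into $\sum_m C_m^2/\delta_m$, and observe that the order-statistics sum telescopes within each latency block so that $f$ depends only on the class masses $(P_m)$. The paper's proof is two sentences and leaves the actual optimization equivalence implicit; your treatment of both directions (lifting for one inequality, interiority plus KKT necessity for the other) and your remark on attainment in the reduced problem fill in exactly what the paper omits.

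One minor simplification worth noting: your reverse inequality can be obtained without invoking Theorem~\ref{thm:kkt} or KKT necessity at all. For \emph{any} $q\in\simplex$ with class masses $\delta_m=\sum_{i\in\mathcal{C}_m}q_i$, the telescoping argument already shows $f(q)=\sum_m(P_m^K-P_{m-1}^K)t^{(m)}$, and Cauchy--Schwarz gives $\sum_{i\in\mathcal{C}_m}c_i/q_i\ge C_m^2/\delta_m$ with equality precisely at the square-root allocation. Hence $J(q)\ge\text{(reduced objective at }\delta(q)\text{)}$ for every feasible $q$, which yields the reverse inequality directly and shows that the square-root lifting is optimal among all $q$ with prescribed class masses, not only among KKT points. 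This is slightly more elementary than the KKT route, though both are valid.
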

\begin{proof}
By Lemma~\ref{lem:sqrt}, $\sum_i c_i/q_i=\sum_m C_m^2/\delta_m$. The probability that the sampled maximum equals $t^{(m)}$ is $P_m^K-P_{m-1}^K$, giving the first factor.
\end{proof}

Having characterized $q^\star(x)$, we treat $J^\star(x)=J(q^\star(x),x)$ as piecewise smooth under a fixed latency ordering and derive its envelope derivative to obtain a first-order placement condition.
Before turning to the outer placement problem in Section~4, we further specialize the reduced form (\ref{eq:class-reduced}) to a two-class structure to expose an explicit participation scale and a phase transition in slow-class sampling.

\begin{Definition}
Let $\fast$ and $\slow$ be the fast/slow classes with times $t_f<t_s$ and gap $\Delta\triangleq t_s-t_f>0$. Define
\begin{equation}
    C_f\triangleq\sum_{i\in\fast}\sqrt{c_i},
    \qquad
    C_s\triangleq\sum_{i\in\slow}\sqrt{c_i},
    \qquad
    \delta\triangleq\sum_{i\in\slow} q_i\in(0,1).
\end{equation}
Throughout this subsection, we use $\delta$ without a subscript for simplicity of notation to denote the total probability mass assigned to the slow class, i.e. $\delta \triangleq \sum_{i \in \slow} q_i$.
\end{Definition}

Lemma~\ref{lem:sqrt} fixes the within-class form \eqref{eq:q-two-class} for any given $\delta$.
Substituting into $f$ yields $f(\delta)=t_f(1-\delta)^K+t_s\big(1-(1-\delta)^K\big)=t_s-\Delta(1-\delta)^K$.
Similarly, $\sum_i c_i/q_i=C_f^2/(1-\delta)+C_s^2/\delta$, giving \eqref{eq:Jdelta}.
Differentiation yields \eqref{eq:Jprime0}.

Under the two-class model, the optimal sampling distribution has the form
\begin{equation}
    q_i^\star=
    \begin{cases}
        \displaystyle \frac{1-\delta^\star}{C_f}\sqrt{c_i}, & i\in\fast,\\
        \displaystyle \frac{\delta^\star}{C_s}\sqrt{c_i}, & i\in\slow,
    \end{cases}
    \label{eq:q-two-class}
\end{equation}
where $\delta^\star\in(0,1)$ minimizes the scalar objective
\begin{equation}
    J(\delta)=\underbrace{\big[t_s-\Delta(1-\delta)^K\big]}_{f(\delta)}
    \underbrace{\Bigg[\scoef\Big(\frac{C_f^2}{1-\delta}+\frac{C_s^2}{\delta}\Big)+\sconst\Bigg]}_{g(\delta)}.
    \label{eq:Jdelta}
\end{equation}
Any interior stationary point satisfies
\begin{equation}
    \Delta K(1-\delta)^{K-1} g(\delta)
    +f(\delta)\,\scoef\Big(\frac{C_f^2}{(1-\delta)^2}-\frac{C_s^2}{\delta^2}\Big)=0.
    \label{eq:Jprime0}
\end{equation}

\begin{Proposition}
\label{prop:two-kkt}
Let $q^\star$ be any interior KKT point of the two-class problem. Then $\delta^\star$ satisfies the scalar condition
\begin{equation}
    \frac{C_s^2}{(\delta^\star)^2}-\frac{C_f^2}{(1-\delta^\star)^2}
    =\frac{g(\delta^\star)}{\scoef f(\delta^\star)}\,K\Delta\,(1-\delta^\star)^{K-1},
    \label{eq:two-kkt}
\end{equation}
with $f(\delta)$ and $g(\delta)$ defined in \eqref{eq:Jdelta}. In particular, the right-hand side is nonnegative, hence any KKT point satisfies
$C_s/(\delta^\star)\ge C_f/(1-\delta^\star)$.
\end{Proposition}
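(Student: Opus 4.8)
The plan is to start from the scalar reduction already established: by Lemma~\ref{lem:sqrt} the within-class allocation is pinned down by \eqref{eq:q-two-class}, and by Corollary~\ref{cor:class} specialized to $M=2$, (P1) under the two-class structure is equivalent to minimizing the one-dimensional $J(\delta)$ of \eqref{eq:Jdelta} over $\delta\in(0,1)$, the simplex constraint having been absorbed. Since $g(\delta)=\scoef\big(C_f^2/(1-\delta)+C_s^2/\delta\big)+\sconst\to+\infty$ as $\delta\to 0^+$ or $\delta\to 1^-$ while $f$ stays bounded, the argument of Lemma~\ref{lem:interior} applies verbatim and every KKT point is interior, hence an unconstrained stationary point: $J'(\delta^\star)=0$. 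This is exactly \eqref{eq:Jprime0}, obtained in the excerpt by differentiating $f(\delta)=t_s-\Delta(1-\delta)^K$ and $g(\delta)$ term by term; so the first step is merely to record $f'(\delta)=\Delta K(1-\delta)^{K-1}$ and $g'(\delta)=\scoef\big(C_f^2/(1-\delta)^2-C_s^2/\delta^2\big)$ and write $0=f'(\delta^\star)g(\delta^\star)+f(\delta^\star)g'(\delta^\star)$.

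The second step is a one-line rearrangement. Since $f(\delta^\star)>0$ and $\scoef>0$, divide the stationarity identity by $\scoef f(\delta^\star)$ and move the $g'$ term to the other side:
\[
\frac{C_s^2}{(\delta^\star)^2}-\frac{C_f^2}{(1-\delta^\star)^2}
=\frac{g(\delta^\star)}{\scoef f(\delta^\star)}\,K\Delta\,(1-\delta^\star)^{K-1},
\]
which is \eqref{eq:two-kkt}. The positivity of $f(\delta^\star)$ used here is immediate: $f(\delta^\star)=t_s-\Delta(1-\delta^\star)^K\ge t_s-\Delta=t_f>0$ because $(1-\delta^\star)^K\in(0,1)$ for $\delta^\star\in(0,1)$ and the $t_i$ are genuine latencies (positive by \eqref{eq:ti}).

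The third step establishes the sign claim. Every factor on the right-hand side of \eqref{eq:two-kkt} is nonnegative: $g(\delta^\star)=\scoef\big(C_f^2/(1-\delta^\star)+C_s^2/\delta^\star\big)+\sconst>0$ since $\scoef>0$, $C_f,C_s>0$ (each a sum of $\sqrt{c_i}>0$), and $\sconst\ge 0$; $f(\delta^\star)>0$ as just shown; $K\ge 1$; $\Delta>0$ by hypothesis; and $(1-\delta^\star)^{K-1}\ge 0$ (indeed strictly positive) for $\delta^\star\in(0,1)$. Hence the right-hand side is nonnegative, so $C_s^2/(\delta^\star)^2\ge C_f^2/(1-\delta^\star)^2$, and taking square roots of both (positive) sides gives $C_s/\delta^\star\ge C_f/(1-\delta^\star)$.

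There is essentially no hard analytical step here — the proposition is a direct consequence of first-order stationarity plus elementary sign bookkeeping. The only points requiring care are (i) confirming the reduction to the interior scalar problem is legitimate, so that $J'(\delta^\star)=0$ genuinely holds (this rests on the blow-up of $g$ at the boundary, as in Lemma~\ref{lem:interior}); and (ii) tracking which constants are strictly versus weakly positive — the argument only needs $\sconst\ge 0$ and $t_f=t_s-\Delta>0$. One may additionally remark that, since every factor on the right-hand side of \eqref{eq:two-kkt} is in fact strictly positive whenever $\Delta>0$, the inequality $C_s/\delta^\star>C_f/(1-\delta^\star)$ holds strictly.
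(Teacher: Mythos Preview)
Your proof is correct. The route differs from the paper's: the appendix proof works at the level of the full $N$-client KKT system in Theorem~\ref{thm:kkt}, computes the tail sensitivities $D_s(q^\star)=K\Delta(1-\delta^\star)^{K-1}$ for fast indices and $D_s(q^\star)=0$ for slow indices, and then writes \eqref{eq:kkt-main} once for a fast client and once for a slow client (using the within-class identity $c_i/(q_i^\star)^2=C_f^2/(1-\delta^\star)^2$, resp.\ $C_s^2/(\delta^\star)^2$) to eliminate the multiplier $\lambda$ and arrive at \eqref{eq:two-kkt}. You instead exploit the scalar reduction already provided by Lemma~\ref{lem:sqrt}/Corollary~\ref{cor:class} and differentiate $J(\delta)$ directly, which is exactly \eqref{eq:Jprime0} followed by a division by $\scoef f(\delta^\star)$. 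Your path is shorter and more self-contained, but the paper's derivation has the advantage of displaying Proposition~\ref{prop:two-kkt} as the $M=2$ instance of the general tail-premium recursion \eqref{eq:recursion}, making the structural connection to Theorem~\ref{thm:kkt} explicit rather than rederiving stationarity from scratch. Both the sign bookkeeping and the interiority justification you give are fine; your remark that the inequality is strict when $\Delta>0$ is also correct.
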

\begin{proof}
Refer to Appendix \ref{appendix:proposition1}.
\end{proof}

\noindent Finally, we ask how the optimal slow-class mass behaves as the per-round sample size \(K\) grows: does straggler synchronization force \(\delta^\star\) to shrink to the critical scale \(\Theta(1/K)\), and under what explicit condition does such a collapse occur for global minimizers?

We obtain the following theorem, which is stated for a sequence of problems indexed by $K$.
Since $C_s=\sum_{i\in\slow}\sqrt{c_i}$ aggregates the slow-class statistical weight, the condition
$C_s^2=O(1/K)$ should be read as a small-slow-class regime in which the slow class contributes
vanishing aggregated statistical benefit as $K$ grows, e.g., the slow clients carry a diminishing share of the target weights $p_i$, so that the $K$-fold tail-latency amplification can dominate and force $\delta^\star$ to the critical $1/K$ scale.

\begin{Theorem}
\label{thm:threshold}
Consider a sequence of two-class problems \eqref{eq:Jdelta} indexed by the per-round sample size $K\to\infty$.
Fix a constant $\rho>0$ and define $\delta_\rho\triangleq\rho/K$ and $P_{\rho,K}\triangleq(1-\rho/K)^K$.
Assume the slow-class aggregate satisfies the \,scaling\,
\begin{equation}
    C_s^2=O(K^{-1}),
\end{equation}
and, for some margin $\xi\in(0,1)$ and all sufficiently large $K$,
\begin{equation}
    C_s^2 \le (1-\xi)\,\frac{\rho}{K}\,
\frac{\Delta P_{\rho,K}}{t_s-\Delta P_{\rho,K}}\,
\frac{\scoef C_f^2+\sconst}{\scoef}.
    \label{eq:Cs-th}
\end{equation}
Then any sequence of global minimizers $\{\delta_K^\star\}$ of \eqref{eq:Jdelta} satisfies
\(
    \sup_{K}\,K\delta_K^\star <\infty,
\)
i.e., the slow-class total probability mass collapses from $\Theta(1)$ to $\delta_K^\star=O(1/K)$.
Moreover, the threshold \eqref{eq:Cs-th} is explicit in $(\{c_i\},\Delta,K,\scoef,\sconst)$ through $(C_f,C_s)$.
\end{Theorem}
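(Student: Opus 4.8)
The plan is a direct two‑point comparison of the scalar objective $J$ in \eqref{eq:Jdelta}. A global minimizer $\delta_K^\star$ exists in $(0,1)$ for each $K$ (either by the two‑class specialization of Lemma~\ref{lem:interior}, or simply because $J$ is continuous on $(0,1)$ with $J(\delta)\to+\infty$ as $\delta\to0^+$ and as $\delta\to1^-$). I will show that for a suitable fixed constant $A=A(\rho,\xi)$, every $\delta$ with $\delta\ge A/K$ is strictly worse than the test point $\delta_\rho=\rho/K$ once $K$ is large; since $\delta_K^\star$ attains the minimum, this forces $\delta_K^\star<A/K$, i.e.\ $K\delta_K^\star<A$, for all $K\ge K_0$, while the finitely many $K<K_0$ contribute only the trivial bound $K\delta_K^\star<K_0$, and together these give $\sup_K K\delta_K^\star<\infty$. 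What makes the statement nontrivial is that the unconstrained minimizer of the heterogeneity factor $g$ alone already sits at scale $\delta\approx C_s/C_f=\Theta(1/\sqrt K)$ (since $C_s=\Theta(1/\sqrt K)$ under $C_s^2=O(1/K)$), so the content is that the monotone increase of the tail functional $f$ overwhelms the gentle decrease of $g$ between scales $1/K$ and $1/\sqrt K$; the test‑point method sidesteps having to locate the minimizer.

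First I would upper‑bound $J(\delta_\rho)$. From \eqref{eq:Jdelta}, $f(\delta_\rho)=t_s-\Delta P_{\rho,K}\in(t_f,t_s)$ and $g(\delta_\rho)=(\scoef C_f^2+\sconst)+\scoef C_f^2\tfrac{\delta_\rho}{1-\delta_\rho}+\scoef C_s^2/\delta_\rho$. The only potentially unbounded contribution is $\scoef C_s^2/\delta_\rho=\scoef C_s^2 K/\rho$, and the threshold hypothesis \eqref{eq:Cs-th} bounds it exactly by $(1-\xi)\tfrac{\Delta P_{\rho,K}}{t_s-\Delta P_{\rho,K}}(\scoef C_f^2+\sconst)$. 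Multiplying the resulting bound on $g(\delta_\rho)$ by $f(\delta_\rho)=t_s-\Delta P_{\rho,K}$, the fraction $\tfrac{\Delta P_{\rho,K}}{t_s-\Delta P_{\rho,K}}$ cancels, and after dividing through by $\scoef C_f^2+\sconst>0$ (using $\tfrac{\scoef C_f^2}{\scoef C_f^2+\sconst}\le1$, $t_s-\Delta P_{\rho,K}\le t_s$, $\delta_\rho=\rho/K$) I obtain the clean estimate
\[
\frac{J(\delta_\rho)}{\scoef C_f^2+\sconst}\ \le\ t_s-\xi\,\Delta P_{\rho,K}+\frac{t_s\,\rho}{K-\rho}.
\]
For the lower bound away from the $\Theta(1/K)$ scale, for any $\delta\ge A/K$ I would discard the nonnegative term $\scoef C_s^2/\delta$ and use $\tfrac{1}{1-\delta}\ge1$ to get $g(\delta)\ge\scoef C_f^2+\sconst$, and use that $\delta\mapsto(1-\delta)^K$ is decreasing to get $f(\delta)\ge t_s-\Delta(1-A/K)^K$; hence $\tfrac{J(\delta)}{\scoef C_f^2+\sconst}\ge t_s-\Delta(1-A/K)^K$ uniformly on $\{\delta\ge A/K\}$.

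It then remains to choose $A$ and let $K\to\infty$. Since $P_{\rho,K}=(1-\rho/K)^K\to e^{-\rho}$ and $(1-A/K)^K\to e^{-A}$, picking $A=\rho+\ln(2/\xi)$ makes $e^{-A}=\tfrac{\xi}{2}e^{-\rho}<\xi e^{-\rho}$, so $\Delta\big(\xi P_{\rho,K}-(1-A/K)^K\big)\to\tfrac{\xi}{2}\Delta e^{-\rho}>0$ while $\tfrac{t_s\rho}{K-\rho}\to0$; hence for all $K\ge K_0$ the Step‑2 lower bound strictly exceeds the Step‑1 upper bound, proving $\delta_K^\star<A/K$ and thus $\sup_K K\delta_K^\star<\infty$. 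The argument is short because both error terms are harmless: the residual $\tfrac{t_s\rho}{K-\rho}$ is explicit and vanishing with no extra boundedness hypothesis beyond the given data. The only step needing genuine thought is selecting the right test point ($\delta_\rho=\rho/K$) together with the right crude bounds ($g(\delta)\ge\scoef C_f^2+\sconst$ and $f$ monotone), and then recognizing that the hypothesis $C_s^2=O(1/K)$ is used \emph{solely} to keep $g(\delta_\rho)=O(1)$ (so the slow class is not penalized through the heterogeneity factor at scale $1/K$), while the explicit margin $\xi$ in \eqref{eq:Cs-th} is exactly what produces the strict exponential gap $\xi e^{-\rho}-e^{-A}>0$ that closes the comparison.
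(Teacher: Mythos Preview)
Your argument is correct and uses the same core ingredients as the paper's proof: the test point $\delta_\rho=\rho/K$, the uniform lower bound $g(\delta)\ge\scoef C_f^2+\sconst$, the monotonicity of $f$, and the threshold \eqref{eq:Cs-th} to control $\scoef C_s^2 K/\rho$. The only structural difference is that the paper argues by contradiction (assume a subsequence with $K_n\delta_{K_n}^\star\to\infty$, so $f(\delta_{K_n}^\star)\to t_s$ and $\liminf J_{K_n}(\delta_{K_n}^\star)\ge t_s(\scoef C_f^2+\sconst)$, then derive the same strict improvement at $\delta_\rho$), whereas you run a direct two-point comparison against an explicit barrier $A/K$ with $A=\rho+\ln(2/\xi)$. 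Your version is mildly more constructive, since it outputs an explicit bound $K\delta_K^\star<A$ for all large $K$ rather than merely finiteness of the supremum; the paper's contradiction route is marginally shorter because it does not need to name $A$. Substantively, the two proofs are the same.
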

\begin{proof}
Refer to Appendix \ref{appendix:theorem3}.
\end{proof}

\noindent Theorem~\ref{thm:threshold} formalizes a {participation collapse phenomenon driven by synchronous order statistics. For a fixed finite network, the theorem provides a sufficient condition for collapse at moderate $K$.
When $K$ grows, any constant slow-class mass $\delta=\Theta(1)$ makes the event ``at least one slow client is selected'' almost certain, so the expected round latency saturates at the slow-class time $t_s$.
To keep the expected straggler latency strictly below $t_s$, the slow-class mass must shrink to the critical scale $\delta=\Theta(1/K)$, for which $(1-\delta)^K$ remains order-one.

The condition \eqref{eq:Cs-th} is a sufficient threshold on the slow-class aggregate heterogeneity $C_s^2$ relative to the fast class and system parameters. 
Intuitively, if the slow class contributes relatively little to the non-IID penalty, i.e., small $C_s$, then its statistical benefit cannot compensate the $K$-amplified tail latency it induces, and the wall-clock optimal policy samples it only at rate $O(1/K)$.

\begin{Remark}
The phase transition is driven by order statistics.
Under two classes, the slow class becomes the straggler with probability $1-(1-\delta)^K$.
If $\delta=\Theta(1)$, then $(1-\delta)^K\to 0$ as $K\to\infty$ and $f(\delta)=t_s-\Delta(1-\delta)^K\to t_s$, i.e., the slow class is an almost-sure straggler and the wall-clock latency saturates at the slow-class round time.
To keep $f(\delta)$ strictly below $t_s$ as $K$ grows, the slow-class mass must shrink as $\delta=\rho/K$, for which $(1-\delta)^K\to e^{-\rho}$ and the slow class stops being an almost-sure maximum.
Theorem~\ref{thm:threshold} formalizes this intuition by providing an explicit sufficient condition under which any global minimizer satisfies $\delta^\star=O(1/K)$.
\end{Remark}

\begin{Remark}
The wall-clock optimal design may drive a slow class to $\delta^\star=\Theta(1/K)$, which is often indistinguishable from excluding the class in practice.
If slow clients correspond to long-tail data patterns needed for robust generalization, one can enforce coverage by adding a mild smoothing/coverage term (e.g., an entropy or KL regularizer) or by imposing lower-bound constraints such as $q_i\ge \eta p_i$ for some $\eta\in(0,1)$.
Such modifications preserve the same order-statistics latency structure in $f(\cdot)$ and alter the KKT system by adding an additional marginal term that counteracts collapse, yielding an explicit knob to trade off wall-clock speed against distributional coverage.
\end{Remark}

\section{Pinching-Antenna Placement: Exact Characterization and Global Search}
\label{sec:placement}

Section \ref{sec:kkt} characterized the optimal sampling distribution $q^*(x)$ for any fixed antenna position $x$. 
We now address the outer optimization: finding the globally optimal placement $x^*$ that minimizes the resulting wall-clock objective $J^*(x) \triangleq \min_{q \in \simplex} J(q,x)$.

The primary challenge arises from the non-smoothness of $J^*(x)$, due to discrete changes in the latency ordering \eqref{eq:sorted-t} as $x$ varies. 
Our strategy is to partition $[0,L]$ into finitely many regions (Lemma~\ref{lem:finite}) within which the ordering is fixed, 
apply an envelope-theorem derivative (Proposition~\ref{prop:envelope}), and search over stationary points and region boundaries (Algorithm~\ref{alg:global}).

\subsection{Envelope Derivative}
The PA position affects only $f(q,x)$ through the times $t_i(x)$.
Ordering changes (i.e., $t_i(x)=t_j(x)$) create nondifferentiable points because the sorted order \eqref{eq:sorted-t} changes.
Between such breakpoints, the ordering is fixed. For the placement derivative we need $t'_i(x)$. From (\ref{eq:ti}), we have 
\begin{equation}
    t_i'(x)=
    \frac{S_i}{B\ln 2}
    \frac{\gamma_i(x)}{(1+\gamma_i(x))\,\big[\log_2(1+\gamma_i(x))\big]^2}
    \Bigg(\frac{2(x-u_i)}{(x-u_i)^2+r_i^2}\Bigg).
    \label{eq:ti-deriv}
\end{equation}

\begin{Proposition}
\label{prop:envelope}
Fix an interval $\mathcal{I}\subset[0,L]$ on which the ordering \eqref{eq:sorted-t} does not change.
Let $q^\star(x)\in\argmin_{q\in\simplex}J(q,x)$.
Assume $q^\star(x)$ is chosen so that it varies continuously on $\mathcal{I}$ (e.g., the inner minimizer is unique).
Then for all $x\in\mathcal{I}$ where $t_i(x)$ are differentiable,
\begin{equation}
    \frac{\mathrm{d}}{\mathrm{d}x}J^\star(x)
    =g\big(q^\star(x)\big)\sum_{i=1}^N\big(Q_i(q^\star(x))^K-Q_{i-1}(q^\star(x))^K\big)\,t_i'(x).
    \label{eq:envelope}
\end{equation}
Consequently, any stationary point $x^\star\in\mathrm{int}(\mathcal{I})$ satisfies
\begin{equation}
    \sum_{i=1}^N\big(Q_i(q^\star(x^\star))^K-Q_{i-1}(q^\star(x^\star))^K\big)\,t_i'(x^\star)=0,
    \label{eq:xFOC}
\end{equation}
with $t_i'(x)$ given by \eqref{eq:ti-deriv}.
\end{Proposition}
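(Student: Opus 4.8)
The plan is to prove Proposition~\ref{prop:envelope} by a direct application of the envelope theorem (Danskin-type differentiation), exploiting the fact that on the interval $\mathcal{I}$ the latency ordering \eqref{eq:sorted-t} is fixed, so the prefix sums $Q_i$ and the statistical factor $g$ inherit a clean structure. First I would record the decomposition $J^\star(x)=f(q^\star(x),x)\,g(q^\star(x))$ and observe that, by Corollary~\ref{cor:class} and the interiority of the minimizer (Lemma~\ref{lem:interior}), the optimal value is attained at an interior point; under the stated uniqueness/continuity hypothesis on $q^\star(\cdot)$, the map $x\mapsto q^\star(x)$ is continuous on $\mathcal{I}$, and since $f$ is $C^1$ in $x$ on the differentiability points (via \eqref{eq:ti-deriv}) and $C^1$ in $q$ on the interior (Lemma~\ref{lem:f-concave}), the composite $x\mapsto J^\star(x)$ is differentiable wherever the $t_i$ are.

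The key computational step is the cancellation of the inner-variation term. Writing $J^\star(x)=J(q^\star(x),x)$ and differentiating by the chain rule gives
\begin{equation}
\frac{\mathrm{d}}{\mathrm{d}x}J^\star(x)
=\nabla_q J(q^\star(x),x)^\top \frac{\mathrm{d}q^\star(x)}{\mathrm{d}x}
+\frac{\partial}{\partial x}J(q^\star(x),x).
\end{equation}
Because $q^\star(x)$ is an interior minimizer over the simplex, the stationarity condition \eqref{eq:kkt-main} says $\nabla_q J(q^\star(x),x)=-\lambda(x)\mathbf{1}$ for a scalar multiplier $\lambda(x)$; since $q^\star(x)$ stays on the simplex, $\mathbf{1}^\top \frac{\mathrm{d}q^\star(x)}{\mathrm{d}x}=0$, so the first term vanishes. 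It then remains to compute $\frac{\partial}{\partial x}J(q,x)=g(q)\,\frac{\partial}{\partial x}f(q,x)$, using that $g$ does not depend on $x$, and that from \eqref{eq:f-order} with the ordering held fixed, $\frac{\partial}{\partial x}f(q,x)=\sum_{i=1}^N (Q_i^K-Q_{i-1}^K)\,t_i'(x)$, which yields \eqref{eq:envelope}. Setting the derivative to zero and dividing by $g(q^\star(x^\star))>0$ (positive since $\scoef,c_i,\sconst>0$) gives \eqref{eq:xFOC}.

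A couple of technical points deserve care. One should justify that $\frac{\mathrm{d}q^\star(x)}{\mathrm{d}x}$ exists (or, to avoid assuming differentiability of $q^\star$, use the one-sided Danskin argument: upper and lower Dini derivatives of $J^\star$ are squeezed between $\partial_x J(q^\star(x^\pm),x)$, and continuity of $q^\star$ plus continuity of $\partial_x f$ forces equality); the cleaner route is to phrase the proof via the envelope theorem for $C^1$ parametrized minima and note the hypothesis already grants continuity of $q^\star$, which together with the $C^1$ joint dependence is enough for the derivative formula without needing $q^\star$ itself to be $C^1$. One should also note the ordering hypothesis is exactly what makes \eqref{eq:f-order}/\eqref{eq:f-telescope} a smooth (indeed polynomial) function of $(q,x)$ on $\mathcal{I}$: across a breakpoint the index relabeling is discontinuous and the formula need not hold. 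The main obstacle is this regularity bookkeeping --- cleanly invoking an envelope/Danskin theorem under the weakest hypothesis actually available (continuity, not differentiability, of the selection $q^\star$) --- while the algebraic content (the multiplier term killing the inner variation, and the closed form of $\partial_x f$) is immediate from Theorem~\ref{thm:kkt} and \eqref{eq:f-order}.
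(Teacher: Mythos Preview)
Your proposal is correct and follows essentially the same envelope-theorem route as the paper: both observe that the fixed ordering on $\mathcal{I}$ makes $f(q,x)=\sum_i\pi_i(q)\,t_i(x)$ smooth in $(q,x)$, compute $\partial_x J=g(q)\sum_i\pi_i(q)\,t_i'(x)$, and then invoke the envelope theorem under the continuous-selection hypothesis. If anything, you are more explicit than the paper, which simply cites ``a standard envelope-theorem argument'' without spelling out the KKT/multiplier cancellation of the inner-variation term or the Dini-derivative alternative you sketch; the reference to Corollary~\ref{cor:class} is unnecessary but harmless.
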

\begin{proof}
On $\mathcal{I}$, the ordering \eqref{eq:sorted-t} is fixed, so the straggler expectation admits the smooth representation
$f(q,x)=\sum_{i=1}^N \pi_i(q)\,t_i(x)$ with weights $\pi_i(q)\triangleq Q_i^K-Q_{i-1}^K$ that depend only on $q$.
Therefore,
$\partial J/\partial x=g(q)\sum_i\pi_i(q)\,t_i'(x)$.

Although the inner minimization over $q$ is generally nonconvex due to the product coupling, $J(q,x)$ is continuously differentiable in $x$ on $\mathcal{I}$.
Under the standing assumption that a continuous selection $q^\star(x)\in\argmin_{q\in\simplex}J(q,x)$ exists on $\mathcal{I}$ (e.g., when the inner minimizer is unique), a standard envelope-theorem argument yields
$\frac{\mathrm{d}}{\mathrm{d}x}J^\star(x)=\left.\frac{\partial J(q,x)}{\partial x}\right|_{q=q^\star(x)}$,
which gives \eqref{eq:envelope}.
Setting \eqref{eq:envelope} to zero yields the stationary condition \eqref{eq:xFOC}.
\end{proof}


\begin{Remark}
The weight $\pi_i\big(q^\star(x)\big)=Q_i(q^\star(x))^K-Q_{i-1}(q^\star(x))^K$ equals the probability that client $i$ is the straggler (the round maximum) under sampling distribution $q^\star(x)$.
Thus \eqref{eq:envelope} states that the outer gradient is a weighted sum of physical gradients $t_i'(x)$, dominated by clients that are most likely to determine the synchronous round time.
\end{Remark}

We justify this piecewise analysis by showing that order changes occur only at finitely many pairwise crossings of $t_i(x)$, yielding a finite breakpoint partition of $[0,L]$.
Without loss of generality, we assume that for any pair $i\ne j$, the function $t_i(x)-t_j(x)$ is not identically zero on $[0,L]$.
\begin{Lemma}
\label{lem:finite}
Under the considered PASS model, for each pair $(i,j)$, the equation $t_i(x)=t_j(x)$ has finitely many solutions in $[0,L]$. Hence, the set of ordering breakpoints is finite.
\end{Lemma}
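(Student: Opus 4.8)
The plan is to show that for each pair $i\ne j$ the difference $\phi_{ij}(x)\triangleq t_i(x)-t_j(x)$ is real-analytic on an open interval containing $[0,L]$ (in fact on all of $\R$), so that, being not identically zero by the standing assumption, it has only isolated zeros; compactness of $[0,L]$ then forces finitely many zeros, and a finite union over the $N(N-1)/2$ pairs yields a finite breakpoint set.

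First I would check analyticity of the building blocks. Since $r_i>0$, the squared distance $d_i(x)^2=(x-u_i)^2+r_i^2\ge r_i^2>0$ for every real $x$, so $\gamma_i(x)=(P_i\eta_f/\sigma^2)/\big((x-u_i)^2+r_i^2\big)$ is a ratio of polynomials with nowhere-vanishing denominator, hence real-analytic on $\R$, and strictly positive. Then $1+\gamma_i(x)>1>0$, so $\log_2\big(1+\gamma_i(x)\big)=\ln\big(1+\gamma_i(x)\big)/\ln 2$ is the composition of the real-analytic map $\ln(\cdot)$ on $(0,\infty)$ with a real-analytic, strictly-positive argument, hence real-analytic; it is moreover strictly positive because $\gamma_i(x)>0$. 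Consequently $\tau_i(x)=S_i/\big(B\log_2(1+\gamma_i(x))\big)$ is the reciprocal of a nowhere-vanishing real-analytic function, hence real-analytic, and so is $t_i(x)=t_i^{\mathrm{comp}}+\tau_i(x)$; therefore $\phi_{ij}$ is real-analytic on $\R$.

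Next I would invoke the identity theorem. By assumption $\phi_{ij}\not\equiv 0$ on $[0,L]$, so $\phi_{ij}$ is a nonzero real-analytic function on $\R$, and a nonzero real-analytic function has only isolated zeros (a non-isolated zero $x_0$ would have all Taylor coefficients of $\phi_{ij}$ at $x_0$ equal to zero, forcing $\phi_{ij}\equiv 0$). Hence the zero set $Z_{ij}\triangleq\{x\in[0,L]:\phi_{ij}(x)=0\}$ has no accumulation point: any such point would lie in $[0,L]$ by closedness and be a zero of $\phi_{ij}$ by continuity, i.e.\ a non-isolated zero, which is impossible. Since $[0,L]$ is compact, $Z_{ij}$ is finite. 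Finally, an ordering breakpoint of \eqref{eq:sorted-t} can occur only at an $x$ with $t_i(x)=t_j(x)$ for some pair $i\ne j$, i.e.\ in $\bigcup_{i<j}Z_{ij}$, a finite union of finite sets, so the breakpoint set is finite.

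This lemma is essentially routine; the only points needing care are (i) that each composition stays inside the domain where the outer operation ($\ln$, and reciprocation) is analytic -- which is precisely why the strict bounds $1+\gamma_i(x)>1$ and $\log_2(1+\gamma_i(x))>0$ are recorded -- and (ii) the step from ``no accumulation point in the open domain of analyticity'' to ``finitely many zeros in the closed interval,'' which uses compactness. No explicit count of crossings is required for the statement, although one could in principle extract one from the structure of the transcendental equation $a\,\ln y_i(x)\ln y_j(x)+S_i\ln y_j(x)-S_j\ln y_i(x)=0$ with $y_i(x)=1+\gamma_i(x)$ and $a=B(t_i^{\mathrm{comp}}-t_j^{\mathrm{comp}})/\ln 2$.
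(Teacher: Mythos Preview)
Your proof is correct and follows essentially the same approach as the paper: establish real-analyticity of each $t_i(x)$ via closure under composition, then invoke the identity theorem together with compactness of $[0,L]$ to conclude finiteness of the zero set. Your version is simply more carefully fleshed out, explicitly verifying that each composition stays within the domain of analyticity (e.g., $1+\gamma_i(x)>1$ so $\log_2(1+\gamma_i(x))>0$), whereas the paper records the argument in a single sentence.
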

\begin{proof}
The functions $t_i(x)$ are real-analytic in $x$ on $[0,L]$ because they are compositions of analytic functions (polynomials, exponential, and $\log(1+\cdot)$).
Thus $t_i(x)-t_j(x)$ is analytic; if it is not identically zero, it has only finitely many zeros on the compact interval $[0,L]$.
\end{proof}

\begin{algorithm}[t]
\caption{Tail-Latency PASS Placement and Participation}
\label{alg:global}
\begin{algorithmic}[1]
\REQUIRE $\{u_i,r_i,S_i,t^{\mathrm{comp}}_i,c_i\}_{i=1}^N$, bandwidth $B$, $\scoef,\sconst$, $K$, interval $[0,L]$, tolerance $\eta$.
\ENSURE $x^*$ and sampling distribution $q^\star$.
\STATE Compute $t_i(x)$ by \eqref{eq:ti} and $t'_i(x)$ by \eqref{eq:ti-deriv}.
\STATE Initialize breakpoint set $\mathcal{B}\leftarrow\{0,L\}$.
\FOR{each pair $(i,j)$ with $i<j$}
    \STATE Numerically solve $t_i(x)=t_j(x)$ on $[0,L]$; add all roots to $\mathcal{B}$ (within tolerance $\eta$).
\ENDFOR
\STATE Sort $\mathcal{B}$ and form intervals $\mathcal{I}_m=[\mathcal{B}_m,\mathcal{B}_{m+1}]$.
\STATE Initialize best value $J_{\min}\leftarrow\infty$ and best solution $(x^\star,q^\star)\leftarrow(\mathrm{null},\mathrm{null})$.
\FOR{each interval $\mathcal{I}_m$}
    \STATE Pick $x_0\in\mathrm{int}(\mathcal{I}_m)$ and determine the fixed ordering of $\{t_i(x_0)\}$.
    \STATE \textbf{Inner solve:} for each queried $x$, compute minimizer $\hat q(x)$ of $\min_{q\in\Delta} J(q,x)$:
    \STATE \hspace{0.9cm} (a) if multiple clients have identical latencies, form latency classes and solve the reduced problem \eqref{eq:class-reduced};
    \STATE \hspace{0.9cm} (b) otherwise, solve \textbf{(P1)} using a multi-start strategy (e.g., projected gradient / SQP initialized from the statistical rule $q_i\propto\sqrt{c_i}$ and from perturbed variants), and keep the best solution found.
    \STATE \textbf{Outer root:} define $\phi(x)\triangleq\sum_i (Q_i(\hat q(x))^K-Q_{i-1}(\hat q(x))^K)\,t_i'(x)$.
    \STATE Find all roots of $\phi(x)=0$ in $\mathcal{I}_m$ (e.g., bracketing + bisection) and collect candidates $\mathcal{X}_m$.
    \STATE Evaluate $J(\hat q(x),x)$ for all $x\in\mathcal{X}_m\cup\{\mathcal{B}_m,\mathcal{B}_{m+1}\}$ and update $(x^\star,q^\star)$ if improved.

\ENDFOR
\RETURN $x^\star,q^\star$.
\end{algorithmic}
\end{algorithm}

\subsection{Global Placement Algorithm}
In this breakpoint partition, we evaluate stationary points and interval boundaries to
assemble a breakpoint-and-root candidate-enumeration algorithm for PA placement.
By Lemma~\ref{lem:finite}, $[0,L]$ can be partitioned into finitely many fixed-ordering intervals.
In each interval, Proposition~\ref{prop:envelope} reduces placement to 1D root finding.

Algorithm~1 is globally optimal for the outer envelope problem
$\min_{x\in[0,L]} J^\star(x)$ with $J^\star(x)=\min_{q\in\Delta}J(q,x)$, provided that the inner
participation problem is solved to global optimality for each queried $x$, and that all ordering
breakpoints and stationary candidates are enumerated within numerical tolerance.

\section{Simulation}\label{sec:simulation}
In this section, the simulation results are presented to validate the theoretical findings and evaluate the impact of the PASS on FL.
Unless otherwise stated, the system parameters are: bandwidth $B=10$\,MHz, transmit power $P=23$\,dBm, noise power spectral density $N_0=-174$\,dBm/Hz, and waveguide length $L=10$\,m \cite{Lin2025PASSFL,DingPoor2025Placement}.
We consider $N\in\{10,20,30\}$ clients with computation times drawn i.i.d.\ from $t_i^{\mathrm{comp}}\sim\mathrm{Unif}[0.05,0.15]$\,s.
To model non-IID data, we synthesize per-client label distributions using a Dirichlet model with concentration $\alpha_{\mathrm{data}}=0.3$; the heterogeneity score for each client is defined as the $\ell_2$ deviation from the global label distribution.
We implement synchronous FedAvg for $R=100$ rounds with per-round participation.
Each selected client performs $E=2$ local epochs of SGD with batch size 32, learning rate 0.01 \cite{Lin_TWC_2024, kaidi_FL}. The server aggregates client models by weighting proportional to local data size.
Test accuracy is evaluated every 5 rounds. For “$1\times$ data” and “$2\times$ data”, each client
holds 100 and 200 training samples, respectively.
We use a lightweight CNN: for MNIST, two $3\times3$ conv layers (32 and 64 channels) followed by
a 128-unit FC layer; for CIFAR-10, two $3\times3$ conv layers (64 and 128 channels) with $2\times2$
max-pooling after each conv, followed by a 256-unit FC layer \cite{Lin_twc_2025}.
\begin{figure}[t]
    \centering
    \includegraphics[width=1\linewidth]{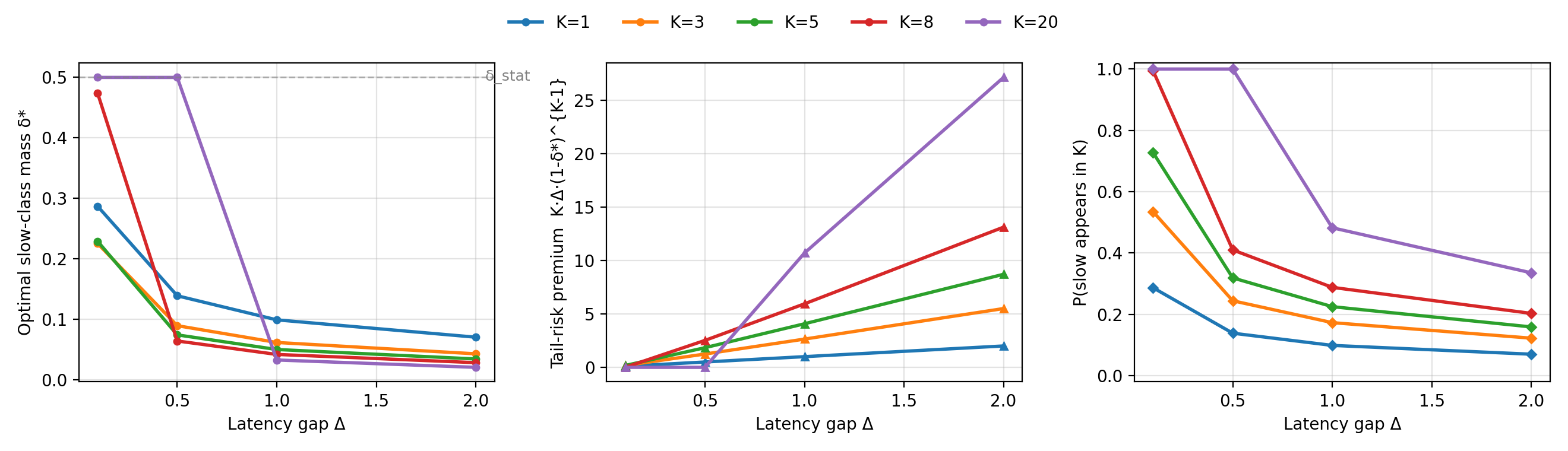}
    \caption{Tail-latency premium induced by synchronous straggling.
    (a) Optimal slow-class mass $\delta^\star$ versus latency gap $\Delta$.
    (b) Tail-latency premium $K\Delta(1-\delta^\star)^{K-1}$ at the optimizer.
    (c) Probability of sampling at least one slow client, $1-(1-\delta^\star)^K$.
    Larger $K$ amplifies the straggler penalty, driving $\delta^\star$ toward the $\Theta(1/K)$ scale.}
    \label{fig:exp1_tail_risk_premium}
\end{figure}

\begin{figure}[t]
    \centering
    \includegraphics[width=1\linewidth]{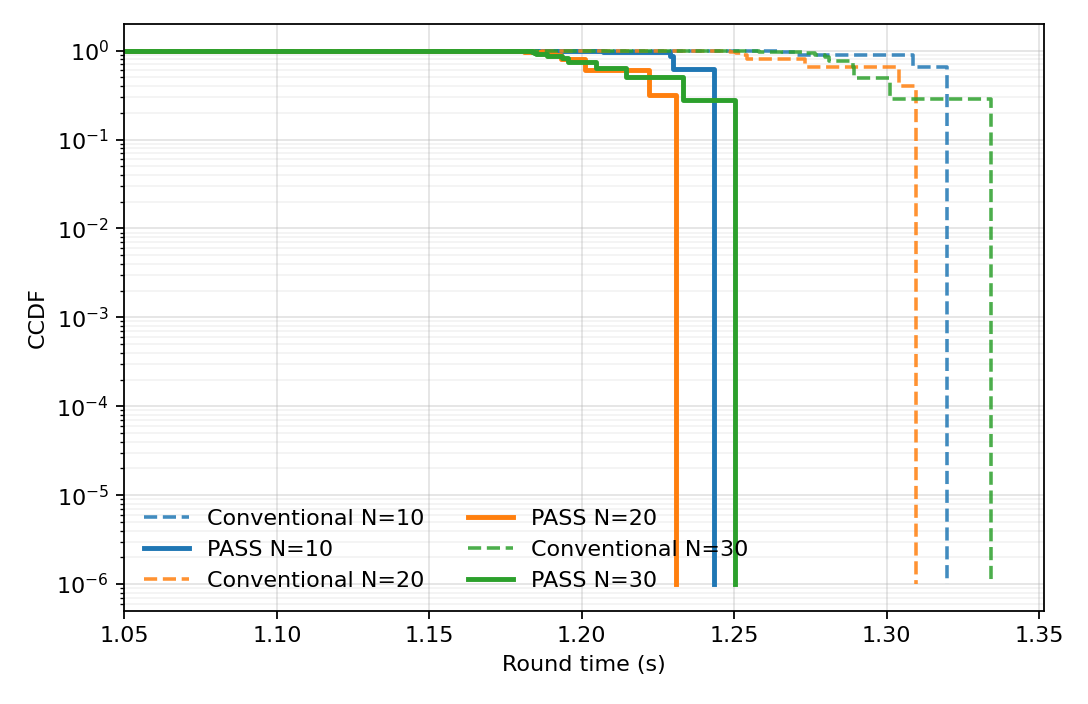}
    \caption{Complementary CDF of synchronous round time $T_{\mathrm{round}}$ under conventional fixed placement and PASS-optimized placement, for $N\in \{10, 20, 30\}$ clients. PASS shifts the tail left, reducing extreme straggler rounds that dominate wall-clock training.}
    \label{fig:ccdf}
\end{figure}

Fig.~\ref{fig:exp1_tail_risk_premium} illustrates how the optimal slow-class mass $\delta^\star$ responds to the latency gap $\Delta$.
Fig.~\ref{fig:exp1_tail_risk_premium}(a) shows that $\delta^\star$ decreases monotonically with $\Delta$ across all $K$: as the slow--fast gap widens, the wall-clock penalty for sampling slow clients increases, and the optimizer suppresses $\delta^\star$.
The decrease is sharper for larger $K$ because the probability of drawing at least one slow client grows with $K$, making straggling nearly certain unless $\delta^\star$ is reduced.
Fig.~\ref{fig:exp1_tail_risk_premium}(b) plots the tail-latency premium $K\Delta(1-\delta^\star)^{K-1}$---the marginal latency cost per unit increase in slow-class mass.
As $\Delta$ grows and $\delta^\star$ shrinks, $(1-\delta^\star)^{K-1}\approx 1$, yielding an effective amplification close to $K\Delta$.
This is the regime where order-statistics straggling most strongly distorts the allocation: small changes in $\delta$ have large wall-clock consequences.
Fig.~\ref{fig:exp1_tail_risk_premium}(c) shows the probability $1-(1-\delta^\star)^K$ that at least one slow client is sampled.
The optimizer implicitly targets a non-saturating straggler probability by shrinking $\delta^\star$ as $\Delta$ increases.
The scaling $\delta=\rho/K$ yields $(1-\delta)^K\to e^{-\rho}$, explaining why the $\Theta(1/K)$ participation scale emerges naturally.

\begin{figure}[t]
    \centering
    \includegraphics[width=1\linewidth]{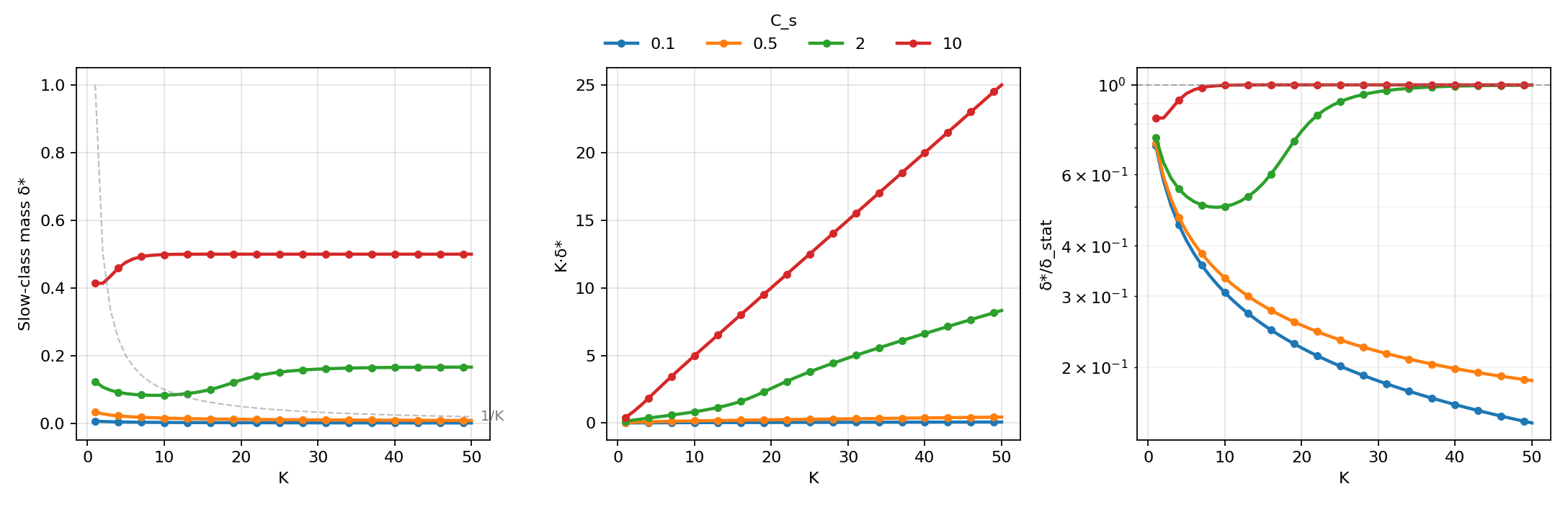}
    \caption{Phase transition of slow-class participation.
    (a) Optimal $\delta^\star$ versus $K$ for different $C_s$ values.
    (b) Normalized product $K\delta^\star$: bounded behavior indicates $\delta^\star=\Theta(1/K)$; linear growth indicates $\delta^\star=\Theta(1)$.
    (c) Ratio $\delta^\star/\delta^\star_{\mathrm{stat}}$, quantifying deviation from the statistics-only optimum.}
    \label{fig:exp2_phase_transition}
\end{figure}

Fig.~\ref{fig:ccdf} compares the complementary CDF $\mathbb{P}(T_{\mathrm{round}}>t)$ for conventional fixed placement (dashed) and PASS-optimized placement (solid).
Across exceedance probabilities down to $\sim 10^{-6}$, PASS exhibits a systematic left shift, reducing both mean round time and the high quantiles that dominate wall-clock performance under synchronization.
The persistent separation deep into the tail indicates that PASS improves the tail structure, reducing extreme straggler rounds rather than merely shifting the mean.
As $N$ increases, the conventional tail degrades more noticeably, while PASS remains stable, consistent with PASS mitigating worst-link outcomes that govern the maximum-order statistic.

Fig.~\ref{fig:exp2_phase_transition} validates the phase transition predicted by Theorem~\ref{thm:threshold}.
Panel~(a) shows that for smaller $C_s$, $\delta^\star$ decreases rapidly with $K$ and approaches the $1/K$ reference, while for larger $C_s$, $\delta^\star$ remains $O(1)$---the slow class's statistical value justifies frequent inclusion despite its latency cost.
Panel~(b) plots $K\delta^\star$: bounded (flat) behavior confirms $\delta^\star=\Theta(1/K)$, while linear growth indicates $\delta^\star=\Theta(1)$.
Panel~(c) shows $\delta^\star/\delta^\star_{\mathrm{stat}}$, the ratio to the statistics-only optimum.
For small $C_s$, this ratio decreases with $K$, indicating that tail latency increasingly dominates; for large $C_s$, it remains near unity, indicating statistical considerations prevail.

\begin{figure}[t]
    \centering
    \includegraphics[width=1\linewidth]{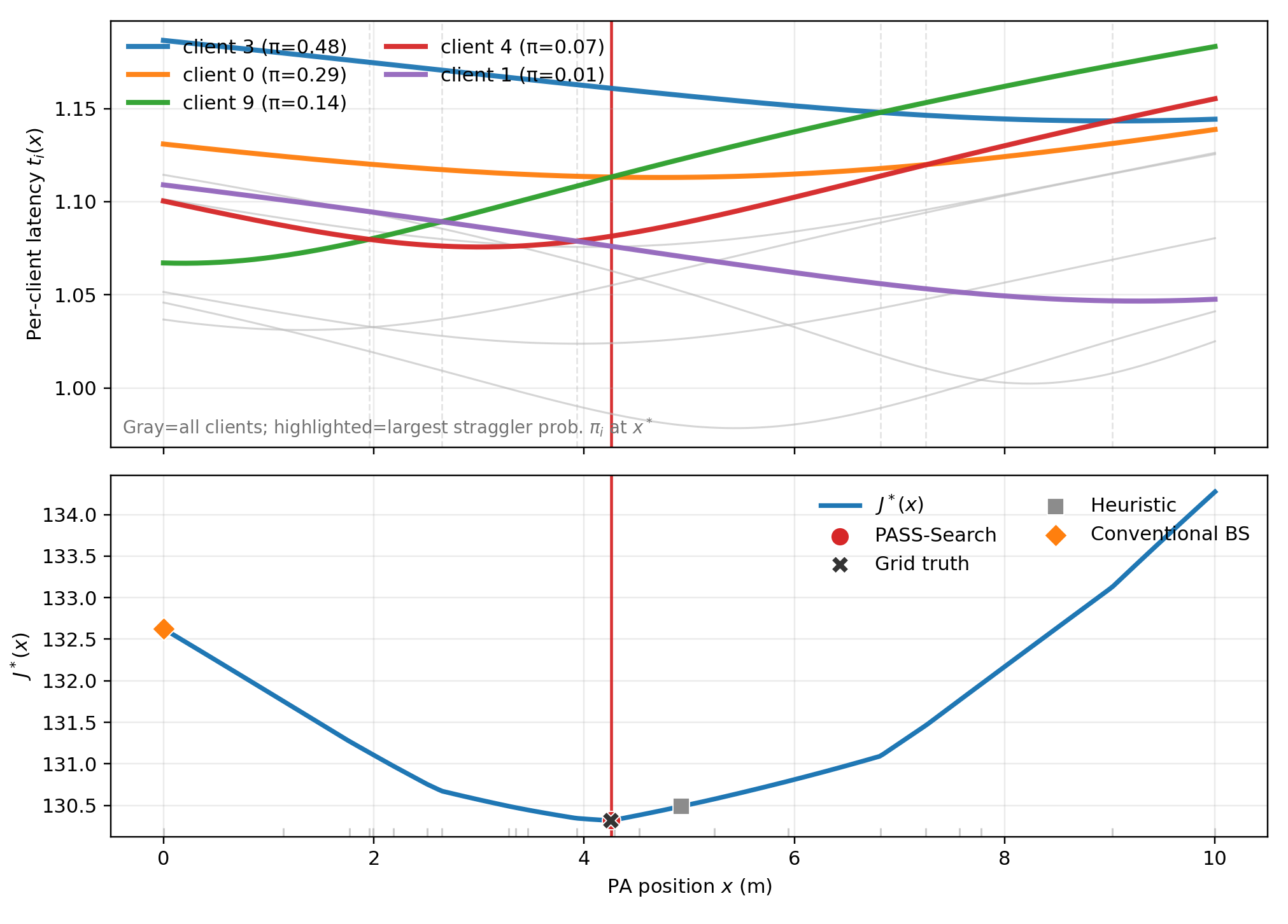}
    \caption{Piecewise-smooth envelope objective versus PA position.
    Top panel: per-client latency profiles $t_i(x)$; crossings induce ordering changes.
    Bottom panel: envelope objective $J^\star(x)$; breakpoints occur where $t_i(x)=t_j(x)$.
    At the optimum $x^\star$ (red line), straggling concentrates on different clients.}
    \label{fig:exp3_breakpoints_mechanism}
\end{figure}

Fig.~\ref{fig:exp3_breakpoints_mechanism} illustrates the piecewise structure of the placement problem.
The upper panel shows per-client latency profiles $t_i(x)$; curve crossings change the latency ordering and induce breakpoints.
The lower panel shows the envelope objective $J^\star(x)=\min_{q\in\simplex}J(q,x)$: within any interval where the ordering is fixed, $J^\star(x)$ varies smoothly, but its slope changes at breakpoints where $t_i(x)=t_j(x)$.

At the optimum $x^\star$ (red vertical line), straggling is concentrated on a few clients: $\pi_3\approx0.48$, $\pi_0\approx0.29$, $\pi_9\approx0.14$, $\pi_4\approx0.07$, and $\pi_1\approx0.01$.
This concentration yields an interpretable stationarity condition: optimality requires $\phi(x)=\sum_i\pi_i(x)\,t_i'(x)\approx 0$, i.e., the dominant straggler curves contribute opposing local trends that cancel near $x^\star$.

PASS-Search (red marker) matches the grid-based solution (black $\times$) while requiring far fewer evaluations.
In contrast, a geometry-only heuristic (gray square) and the conventional baseline at $x=0$ (orange diamond) incur higher $J^\star(x)$, demonstrating that performance gains require jointly accounting for order-statistics straggling and the breakpoint-induced objective geometry.

\begin{figure}[t]
    \centering
    \includegraphics[width=1\linewidth]{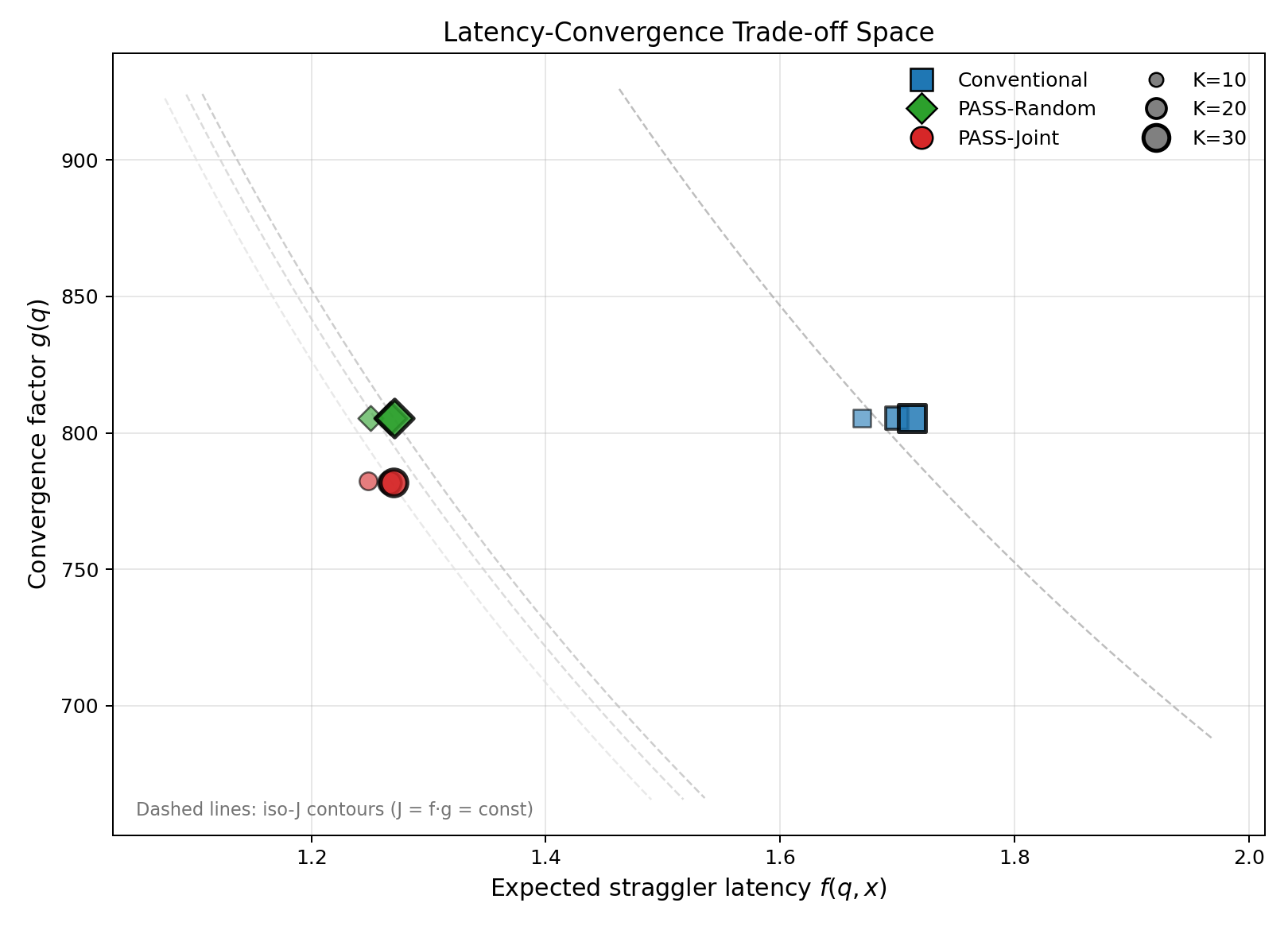}
    \caption{Latency--convergence trade-off space. Each marker represents a design point in the $(f,g)$ plane, where $f(q,x)$ is the expected straggler latency and $g(q)$ is the convergence factor. Marker shape indicates the method (Conventional, PASS-Random, PASS-Joint) and marker size indicates sample size $K\in\{10,20,30\}$.}
    \label{fig:exp12_fg_tradeoff}
\end{figure}


\begin{figure}[t]
    \centering
    \includegraphics[width=1\linewidth]{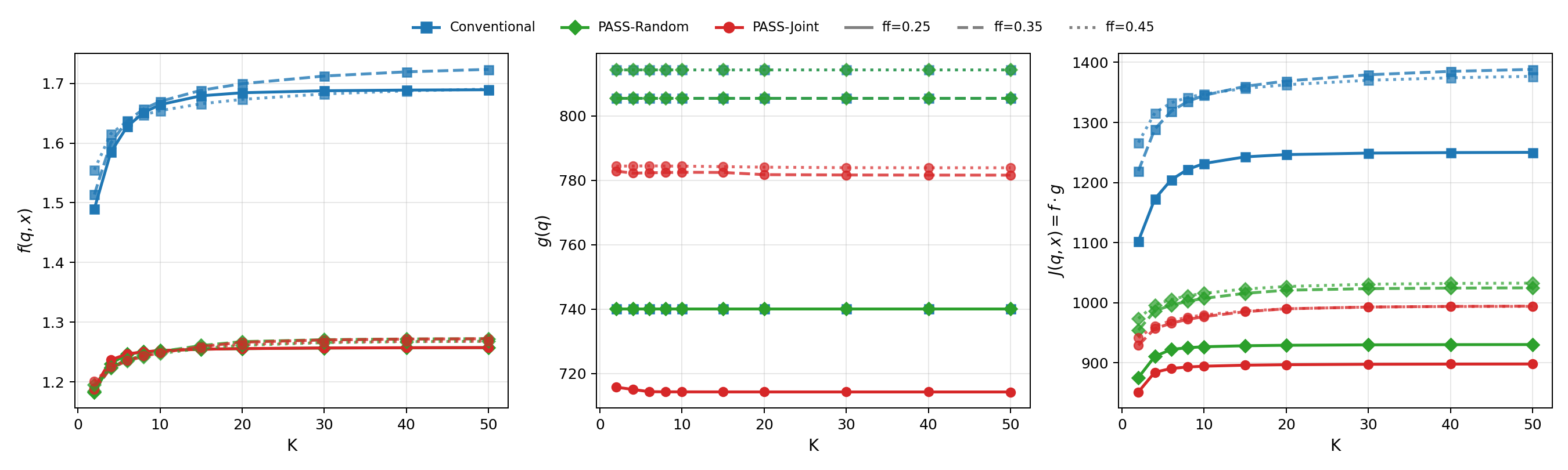}
    \caption{Decomposition of the wall-clock objective $J=f\cdot g$ versus sample size $K$, under different fast-class fractions $f_f\in\{0.25,0.35,0.45\}$. 
    Left: expected straggler latency $f(q,x)$. 
    Center: convergence factor $g(q)$. 
    Right: product $J(q,x)=f\cdot g$. 
    Conventional placement exhibits increasing $f$ with $K$ due to order-statistics amplification; PASS-Joint maintains nearly flat $f$ by optimizing PA position. The convergence factor $g$ is relatively stable across methods. Consequently, the product $J$ grows significantly for Conventional but remains low and stable for PASS-Joint.}
    \label{fig:exp12_fg_vs_k}
\end{figure}


\begin{figure}[t]
    \centering
    \subfloat[\textbf{MNIST.}\label{fig:mnist}]{
        \includegraphics[width=1\linewidth]{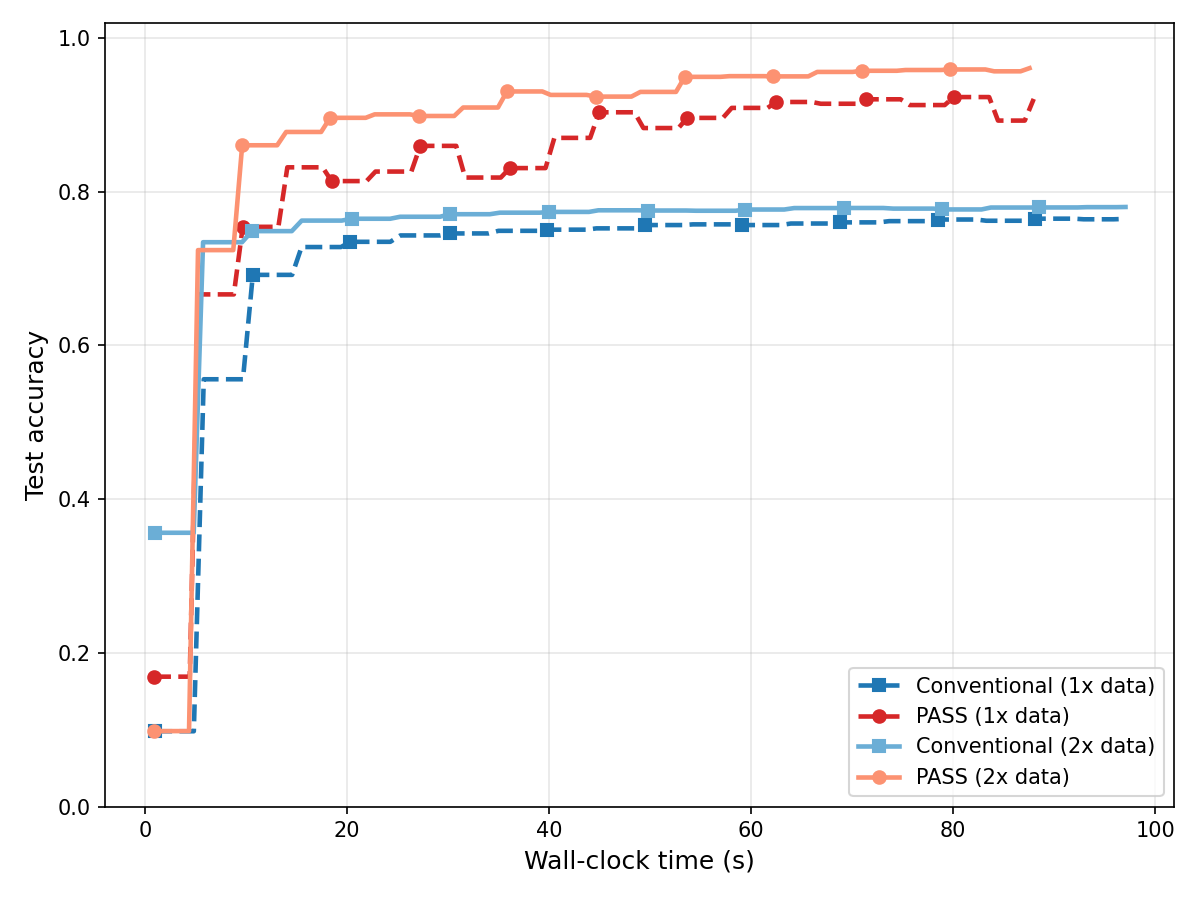}
    }
    \hfill
    \subfloat[\textbf{CIFAR-10.}\label{fig:cifar}]{
        \includegraphics[width=1\linewidth]{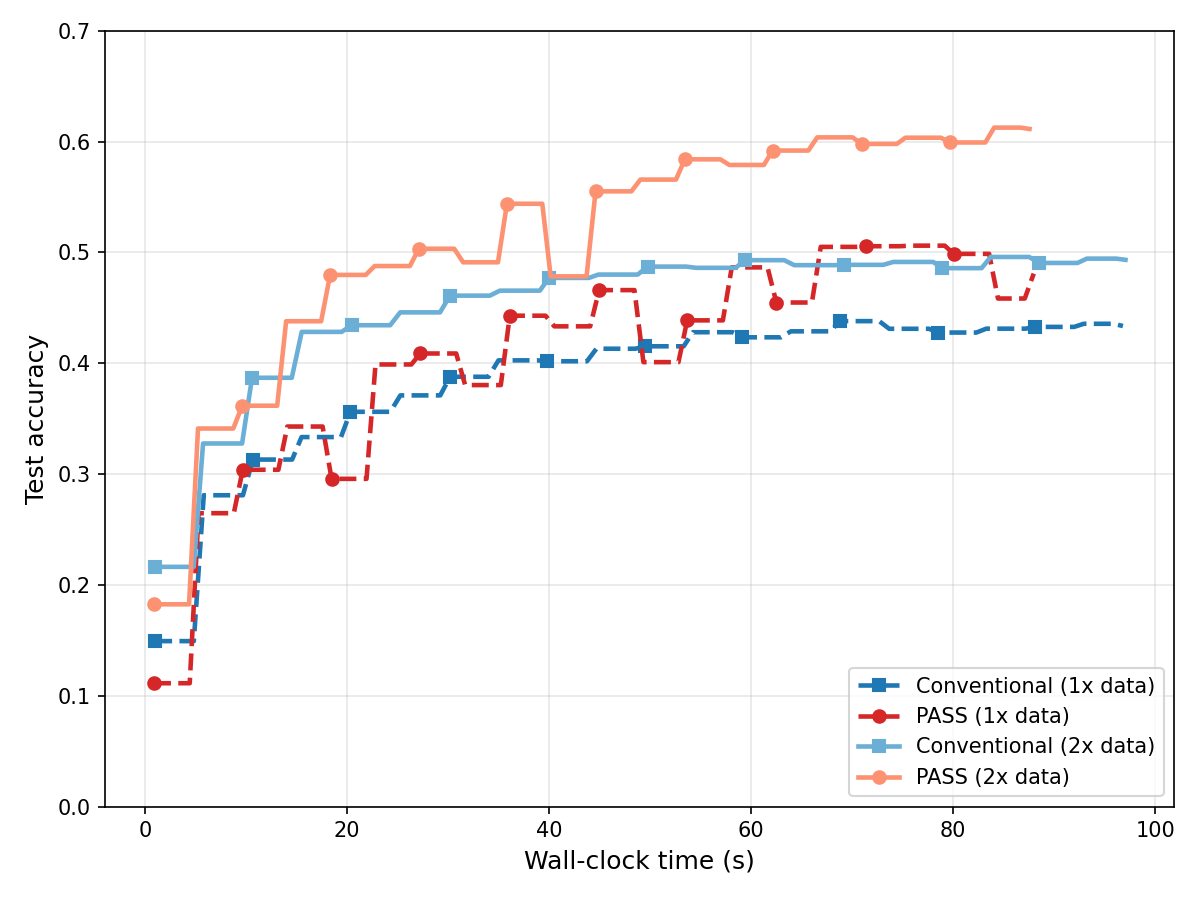}
    }
    \caption{Test accuracy versus wall-clock time under deadline-constrained synchronous FL. Solid lines denote PASS-optimized placement; dashed lines denote conventional fixed placement. ``$1\times$ data'' and ``$2\times$ data'' indicate the relative local dataset size per client. On both datasets, PASS achieves higher accuracy at any given wall-clock budget, with the gap widening under increased data (where more rounds are needed to converge). PASS (2$\times$ data) reaches approximately $95\%$ on MNIST and $61\%$ on CIFAR-10, compared to $77\%$ and $49\%$ for Conventional (2$\times$ data), respectively.}
    \label{fig:exp16_deadline}
\end{figure}



Fig.~\ref{fig:exp12_fg_tradeoff} visualizes the latency--convergence trade-off in the $(f,g)$ plane.
Each design point corresponds to a specific method and sample size $K$: Conventional (fixed placement with uniform sampling), PASS-Random (optimized PA placement with uniform sampling), and PASS-Joint (joint optimization of PA placement and participation).
The dashed iso-$J$ contours show level sets of the wall-clock objective $J=f\cdot g$; points closer to the origin lie on lower contours and achieve better time-to-accuracy.
Conventional designs cluster in the upper-right region ($f\approx 1.7$, $g\approx 805$), reflecting high straggler latency under fixed placement.
PASS-Random shifts points left ($f\approx 1.25$) by optimizing PA position, while PASS-Joint further reduces $g$ (to $\approx 780$) by jointly tuning participation.
Across all $K$, PASS-Joint consistently occupies the lowest iso-$J$ contours, demonstrating that joint optimization yields Pareto-dominant operating points in the latency--convergence space.


Fig.~\ref{fig:exp12_fg_vs_k} decomposes the wall-clock objective into its constituent factors as $K$ varies.
The left panel shows that the expected straggler latency $f(q,x)$ increases with $K$ under Conventional placement: as more clients are sampled, the probability of drawing a slow client (and thus incurring the worst-case latency) grows, consistent with the order-statistics analysis in Section~\ref{sec:kkt}.
In contrast, PASS-Joint maintains a nearly flat $f$ across $K$ by repositioning the PA to shorten the worst links.
The center panel shows that the convergence factor $g(q)$ is relatively insensitive to the method, since all designs use similar participation distributions that satisfy non-IID coverage requirements.
The right panel plots the product $J=f\cdot g$: Conventional exhibits a $20$--$30\%$ increase in $J$ as $K$ grows from $1$ to $50$, while PASS-Joint remains stable, confirming that PASS mitigates the $K$-amplified tail penalty identified theoretically.


Fig.~\ref{fig:exp16_deadline} evaluates end-to-end FL training under a wall-clock budget, comparing Conventional and PASS on MNIST and CIFAR-10.
Each curve tracks test accuracy as training progresses in real time, with solid lines indicating PASS-optimized placement and dashed lines indicating conventional fixed placement.
On MNIST (Fig.~\ref{fig:mnist}), PASS (2$\times$ data) reaches approximately $95\%$ accuracy within $20$\,s and maintains this level, whereas Conventional (2$\times$ data) plateaus around $77\%$.
The gap is even more pronounced on CIFAR-10 (Fig.~\ref{fig:cifar}), where PASS achieves around $61\%$ accuracy compared to around $49\%$ for Conventional under the same data and time budget.
These results confirm that the latency reduction provided by PASS translates directly into faster convergence and higher final accuracy under synchronous FL with deadline constraints: by reducing straggler rounds, PASS completes more communication rounds within the same wall-clock window, accumulating greater statistical progress.

\section{Conclusions} \label{sec:conclusion}

This paper developed a joint theory of client participation and pinching-antenna placement for PASS-enabled FL under non-IID data.
The KKT recursion reveals how latency gaps induce $K$-amplified tail penalties in the optimal sampling distribution, distorting the classical square-root rule.
Under latency-class structure, we established within-class square-root sampling and derived a two-class phase-transition threshold under which slow-class probability mass collapses to the $O(1/K)$ scale.
For antenna placement, we characterized the piecewise envelope structure and provided an exact breakpoint-and-root candidate-enumeration algorithm.
Simulations confirm that the joint optimization yields Pareto-dominant operating points in the latency--convergence trade-off and translates to substantial accuracy gains.
Future directions include multi-PA architectures with coordinated placement across segmented waveguides, and asynchronous FL where clients contribute under bounded staleness rather than strict synchronization.
\appendix

\section{Proof of Theorem \ref{the1}} \label{appendix:theorem1}
Under Assumption~\ref{ass:hetero_penalty},
the standard descent analysis \cite{joint_communications_FL, Lin2025PASSFL} for $K$-client sampling with $E$ local SGD steps yields the well-known $O(1/R)$ bound, where the \emph{only} $q$-dependent term enters through $\sum_i c_i/q_i$.
Specifically, letting $\bw^{r}$ be the global iterate and using the second-moment bound,
we obtain
\begin{equation}\label{eq:plug_descent}
\mathbb{E}\!\left[F(\bw^{R})\right]-F^\star
\;\le\;
\frac{1}{R}\left(\omega\sum_{i=1}^N \frac{c_i}{q_i}+\nu\right),
\end{equation}
Therefore, choosing $R=R_\varepsilon$ such that the RHS of \eqref{eq:plug_descent} is at most $\varepsilon$ gives
\begin{equation}\label{eq:round_bound_from_plug}
\mathbb{E}[R_\varepsilon]
\;\le\;
\frac{1}{\varepsilon}\Bigg(
\scoef\sum_{i=1}^N\frac{c_i}{q_i}
+\sconst
\Bigg),
\end{equation}
which matches \eqref{eq:round_bound}.

\section{Proof of Lemma \ref{lem:interior}} \label{appendix:lemma2}
For fixed $x$, the per-round latencies are positive constants with $0<t_1\le t_i\le t_N$.
Hence the straggler expectation satisfies $t_1\le f(q)\le t_N$ for all $q\in\simplex$.
Moreover, $g(q)=\scoef\sum_i c_i/q_i+\sconst\ge \scoef c_s/q_s$ for any index $s$.
Thus if $q_s\to 0^+$, then $g(q)\to+\infty$ and $J(q)\ge t_1 g(q)\to+\infty$.

\section{Proof of Lemma \ref{lem:f-concave}} \label{appendix:lemma3}
First, consider the function $\phi(u)=u^K$ on $u\ge 0$. For integer $K\ge 1$, $\phi''(u)=K(K-1)u^{K-2}\ge 0$,
so $\phi$ is convex on $[0,\infty)$. $Q_i(q)=\sum_{j=1}^i q_j$ is affine in $q$, hence
$Q_i(q)^K=\phi(Q_i(q))$ is convex in $q$. Since $\Delta_i\ge 0$, the term $-\Delta_i Q_i(q)^K$ is concave. Summing
over $i=1,\ldots,N-1$ and adding the constant $t_N$ establishes the concavity of $f$ on $\bar\Delta$.

Second, fix $q\in\Delta$ and a feasible direction $v$ with $\sum_{i=1}^N v_i=0$. Then
$\mathrm{D}Q_i(q)[v]=\sum_{j=1}^i v_j$. Applying the chain rule, we obtain
$\mathrm{D}f(q)[v] = -\sum_{i=1}^{N-1}\Delta_i\cdot K Q_i^{K-1}\cdot \sum_{j=1}^i v_j$.
Reordering this double sum yields
$\mathrm{D}f(q)[v] = -\sum_{s=1}^{N}\Big(K\sum_{i=s}^{N-1}\Delta_i Q_i^{K-1}\Big)v_s = -\sum_{s=1}^N D_s(q)\,v_s$,
which confirms~\eqref{eq:f-dirder}.

Finally, for $K\ge 2$, differentiating~\eqref{eq:f-dirder} once more yields the Hessian representation \eqref{eq:f-hessian}. Since each $a_i a_i^\top\succeq 0$ and each coefficient $\Delta_i Q_i^{K-2}\ge 0$, we have $\nabla^2 f(q)\preceq 0$, consistent with the concavity property derived in (i).

\section{Proof of Proposition \ref{prop:two-kkt}} \label{appendix:proposition1}
In the sorted order, all fast clients precede slow clients. There is exactly one nonzero gap at the class boundary: $\Delta_{n_f}=\Delta$, where $n_f\triangleq|\fast|$.
Thus, for any slow index $s>n_f$, $D_s=0$; for any fast index $s\le n_f$, $D_s=K\Delta Q_{n_f}^{K-1}=K\Delta(1-\delta)^{K-1}$.

By the within-class square-root form, for any fast client $i$, $c_i/(q_i^\star)^2=C_f^2/(1-\delta^\star)^2$, and for any slow client $j$, $c_j/(q_j^\star)^2=C_s^2/(\delta^\star)^2$.
Applying \eqref{eq:kkt-main} to any slow client gives $\lambda=f\scoef C_s^2/(\delta^\star)^2$.
Applying \eqref{eq:kkt-main} to any fast client and substituting $\lambda$ yields \eqref{eq:two-kkt}.

Let $\bar{\simplex}\!=\!\{q\in\R^N:\ q_i\ge 0,\ \sum_i q_i=1\}$.
Since $J(q)\to+\infty$ near the boundary of $\bar{\simplex}$, any minimizing sequence is eventually contained in a compact subset $\{q\in\bar{\simplex}:\ q_i\ge \varepsilon\ \forall i\}$ for some $\varepsilon>0$.
By compactness and continuity of $J$ on this subset, a global minimizer exists and must satisfy $q_i>0$ for all $i$, i.e., $q^{\mathrm{opt}}\in\simplex$.

\section{Proof of Theorem \ref{thm:threshold}} \label{appendix:theorem3}
The key observation is that the tail term $(1-\delta)^K$ becomes negligible as soon as $K\delta\to\infty$.
Assume, for contradiction, that there exists a subsequence $\{K_n\}$ such that the corresponding minimizers satisfy $K_n\delta_{K_n}^\star\to\infty$.
Then $(1-\delta_{K_n}^\star)^{K_n}\le \exp\{-K_n\delta_{K_n}^\star\}\to 0$ and hence
\begin{equation}
    f(\delta_{K_n}^\star)=t_s-\Delta(1-\delta_{K_n}^\star)^{K_n}\to t_s.
\end{equation}
On the other hand, for all $\delta\in(0,1)$ we have the uniform lower bound
\begin{equation}
    g(\delta)=\scoef\Big(\frac{C_f^2}{1-\delta}+\frac{C_s^2}{\delta}\Big)+\sconst
    \ge \scoef C_f^2+\sconst.
\end{equation}
Therefore,
\begin{equation}
    \liminf_{n\to\infty} J_{K_n}(\delta_{K_n}^\star)
    =\liminf_{n\to\infty} f(\delta_{K_n}^\star)g(\delta_{K_n}^\star)
    \ge t_s(\scoef C_f^2+\sconst).
    \label{eq:liminf-const}
\end{equation}
Now evaluate the objective at the tail test point $\delta_\rho=\rho/K$:
\begin{equation}
    J_K(\delta_\rho)=\big[t_s-\Delta P_{\rho,K}\big]\Bigg[\scoef\Big(\frac{C_f^2}{1-\rho/K}+\frac{C_s^2K}{\rho}\Big)+\sconst\Bigg].
    \label{eq:Jdrho}
\end{equation}
Using \eqref{eq:Cs-th}, we bound the slow-class term as
$\scoef(C_s^2K/\rho)\le (1-\xi)(\scoef C_f^2+\sconst)\,\Delta P_{\rho,K}/(t_s-\Delta P_{\rho,K})$.
Substituting this into \eqref{eq:Jdrho} yields
\begin{equation}
    J_K(\delta_\rho)
    \le \big[t_s-\Delta P_{\rho,K}\big]\Big[\scoef\tfrac{C_f^2}{1-\rho/K}+\sconst\Big]
    +(1-\xi)(\scoef C_f^2+\sconst)\,\Delta P_{\rho,K}.
\end{equation}
Since $\rho$ is fixed and $K\to\infty$, we have $\tfrac{1}{1-\rho/K}=1+O(1/K)$ and $P_{\rho,K}\to e^{-\rho}$. Consequently,
there exists $K_0$ such that for all $K\ge K_0$,
\begin{equation}
    J_K(\delta_\rho)
    \le t_s(\scoef C_f^2+\sconst)-\tfrac{\xi}{2}(\scoef C_f^2+\sconst)\,\Delta P_{\rho,K}.
    \label{eq:strict-improve}
\end{equation}
Combining \eqref{eq:strict-improve} with \eqref{eq:liminf-const} contradicts the optimality of $\delta_{K_n}^\star$ for all sufficiently large $n$.
Hence, $K\delta_K^\star$ cannot diverge; equivalently, $\sup_K K\delta_K^\star<\infty$, i.e., $\delta_K^\star=O(1/K)$.

~\bibliographystyle{IEEEtran}
\bibliography{references}

@article{FL_with_non_iid,
  doi = {10.48550/ARXIV.1806.00582},
  url = {https://arxiv.org/abs/1806.00582},
  author = {Zhao, Yue and Li, Meng and Lai, Liangzhen and Suda, Naveen and Civin, Damon and Chandra, Vikas},
  keywords = {Machine Learning (cs.LG), Machine Learning (stat.ML), FOS: Computer and information sciences, FOS: Computer and information sciences},
  title = {Federated Learning with Non-IID Data},
  publisher = {arXiv},
  year = {2018},
  copyright = {arXiv.org perpetual, non-exclusive license}
}

@article{Zhaohui_FL_survey,
title = {Federated Learning for {6G}: Applications, Challenges, and Opportunities},
journal = {Engineering},
volume = {8},
pages = {33-41},
year = {2022},
issn = {2095-8099},
doi = {https://doi.org/10.1016/j.eng.2021.12.002},
url = {https://www.sciencedirect.com/science/article/pii/S2095809921005245},
author = {Zhaohui Yang and Mingzhe Chen and Kai-Kit Wong and H. Vincent Poor and Shuguang Cui},
keywords = {Federated learning, 6G, Reconfigurable intelligent surface, Semantic communication, Sensing, Communication and computing}
}

@article{FL_original,
  doi = {10.48550/ARXIV.1602.05629},
  url = {https://arxiv.org/abs/1602.05629},
  author = {McMahan, H. Brendan and Moore, Eider and Ramage, Daniel and Hampson, Seth and Arcas, Blaise Agüera y},
  title = {Communication-Efficient Learning of Deep Networks from Decentralized Data},
  publisher = {arXiv},
  year = {2016},
  copyright = {arXiv.org perpetual, non-exclusive license}
}

@misc{fang_PASS,
      title={Pinching Antennas Meet AI in Next-Generation Wireless Networks}, 
      author={Fang Fang and Zhiguo Ding and Victor C. M. Leung and Lajos Hanzo},
      year={2025},
      eprint={2511.07442},
      archivePrefix={arXiv},
      primaryClass={cs.NI},
      url={https://arxiv.org/abs/2511.07442}, 
}

@article{Liu2025PASSOutlook,
  author  = {Yuanwei Liu and Zhaolin Wang and Xidong Mu and Chongjun Ouyang and Xiaoxia Xu and Zhiguo Ding},
  title   = {Pinching-antenna systems: architecture designs, opportunities, and outlook},
  journal = {IEEE Communications Magazine},
  year    = {2025},
  month   = sep,
  note    = {Advance online publication},
  doi     = {10.1109/MCOM.001.2500037},
  url     = {https://doi.org/10.1109/MCOM.001.2500037}
}

@ARTICLE{joint_communications_FL,
  author={Chen, Mingzhe and Yang, Zhaohui and Saad, Walid and Yin, Changchuan and Poor, H. Vincent and Cui, Shuguang},
  journal={IEEE Transactions on Wireless Communications}, 
  title={A Joint Learning and Communications Framework for Federated Learning Over Wireless Networks}, 
  year={2021},
  month={Jan.},
  volume={20},
  number={1},
  pages={269-283},
  doi={10.1109/TWC.2020.3024629}}

@article{Ding2025FlexibleAntenna,
  author  = {Zhiguo Ding and Robert Schober and H. Vincent Poor},
  title   = {Flexible-Antenna Systems: {A} Pinching-Antenna Perspective},
  journal = {IEEE Transactions on Communications},
  year    = {2025},
  month   = oct,
  volume  = {73},
  number  = {10},
  pages   = {9236--9253},
  doi     = {10.1109/TCOMM.2025.3555866},
  url     = {https://doi.org/10.1109/TCOMM.2025.3555866}
}

@ARTICLE{Lin_TWC_2024,
  author={Lin, Yushen and Wang, Kaidi and Ding, Zhiguo},
  journal={IEEE Transactions on Wireless Communications}, 
  title={Rethinking Clustered Federated Learning in NOMA Enhanced Wireless Networks}, 
  year={2024},
  month=nov,
  volume={23},
  number={11},
  pages={16875-16890},
  keywords={NOMA;Wireless networks;Training;Resource management;Servers;Costs;Performance evaluation;Clustering;Dirichlet distribution;federated learning (FL);non-orthogonal multiple access (NOMA);resource allocation},
  doi={10.1109/TWC.2024.3447833}}

@ARTICLE{kaidi_FL,
  author={Wang, Kaidi and Ding, Zhiguo and So, Daniel K. C. and Ding, Zhi},
  journal={IEEE Transactions on Wireless Communications}, 
  title={Age-of-Information Minimization in Federated Learning Based Networks With Non-IID Dataset}, 
  year={2024},
  month=aug,
  volume={23},
  number={8},
  pages={8939-8953},
  keywords={Minimization;Servers;Performance evaluation;Convergence;Resource management;Wireless communication;Data models;Age-of-information (AoI);device selection;federated learning (FL);resource allocation;sub-channel assignment},
  doi={10.1109/TWC.2024.3357208}}

@ARTICLE{Lin_twc_2025,
  author={Lin, Yushen and Wang, Kaidi and Huang, Wenqi and Ding, Zhiguo},
  journal={IEEE Transactions on Wireless Communications}, 
  title={Federated Learning Over Wireless Networks: Optimizing Performance With NOMA and Power Allocation}, 
  year={2026},
  volume={25},
  number={},
  pages={4467-4479},
  keywords={Wireless networks;NOMA;Training;Servers;Optimization;Data models;Convergence;Resource management;Performance evaluation;Federated learning;Federated learning (FL);non-orthogonal multiple access (NOMA);packet error;quadratic transform (QT);resource allocation},
  doi={10.1109/TWC.2025.3611589}}

@misc{Liu2025PASSTutorial,
  author        = {Yuanwei Liu and Hao Jiang and Xiaoxia Xu and Zhaolin Wang and Jia Guo and Chongjun Ouyang and Xidong Mu and Zhiguo Ding and Arumugam Nallanathan and George K. Karagiannidis and Robert Schober},
  title         = {Pinching-Antenna Systems ({PASS}): {A} Tutorial},
  year          = {2025},
  eprint        = {2508.07572},
  archivePrefix = {arXiv},
  doi           = {10.48550/arXiv.2508.07572},
  url           = {https://arxiv.org/abs/2508.07572}
}

@misc{Zeng2025RAPASS,
  author        = {Ming Zeng and Ji Wang and Octavia A. Dobre and Zhiguo Ding and George K. Karagiannidis and Robert Schober and H. Vincent Poor},
  title         = {Resource Allocation for Pinching-Antenna Systems: State-of-the-Art, Key Techniques and Open Issues},
  year          = {2025},
  eprint        = {2506.06156},
  archivePrefix = {arXiv},
  doi           = {10.48550/arXiv.2506.06156},
  url           = {https://arxiv.org/abs/2506.06156}
}

@misc{DingPoor2025Placement,
  author        = {Zhiguo Ding and H. Vincent Poor},
  title         = {Analytical Optimization for Antenna Placement in Pinching-Antenna Systems},
  year          = {2025},
  eprint        = {2507.13307},
  archivePrefix = {arXiv},
  doi           = {10.48550/arXiv.2507.13307},
  url           = {https://arxiv.org/abs/2507.13307}
}

@article{Zhao2025MultiUserPASS,
  author  = {Jingjing Zhao and Haowen Song and Xidong Mu and Kaiquan Cai and Yanbo Zhu and Yuanwei Liu},
  title   = {Pinching-antenna systems-enabled multi-user communications: transmission structures and beamforming optimization},
  journal = {IEEE Transactions on Communications},
  year    = {2025},
  month   = dec,
  note    = {Advance online publication},
  doi     = {10.1109/TCOMM.2025.3643993},
  url     = {https://doi.org/10.1109/TCOMM.2025.3643993}
}

@misc{Li2025PASSISAC,
  author        = {Haochen Li and Ruikang Zhong and Zhiwen Pan and Chao Dong and Jiayi Lei and Yuanwei Liu},
  title         = {Pinching Antenna Systems for Integrated Sensing and Communications},
  year          = {2025},
  eprint        = {2508.19540},
  archivePrefix = {arXiv},
  doi           = {10.48550/arXiv.2508.19540},
  url           = {https://arxiv.org/abs/2508.19540}
}

@inproceedings{Lai2021Oort,
  author    = {Fan Lai and Xiangfeng Zhu and Harsha V. Madhyastha and Mosharaf Chowdhury},
  title     = {Oort: Efficient Federated Learning via Guided Participant Selection},
  booktitle = {15th USENIX Symposium on Operating Systems Design and Implementation (OSDI 21)},
  year      = {2021},
  month     = jul,
  pages     = {19--35},
  publisher = {USENIX Association},
  isbn      = {978-1-939133-22-9},
  url       = {https://www.usenix.org/conference/osdi21/presentation/lai}
}

@inproceedings{Chai2020TiFL,
  author    = {Zheng Chai and Ahsan Ali and Syed Zawad and Stacey Truex and Ali Anwar and Nathalie Baracaldo and Yi Zhou and Heiko Ludwig and Feng Yan and Yue Cheng},
  title     = {{TiFL}: {A} Tier-based Federated Learning System},
  booktitle = {Proceedings of the 29th International Symposium on High-Performance Parallel and Distributed Computing (HPDC '20)},
  year      = {2020},
  pages     = {125--136},
  publisher = {Association for Computing Machinery},
  doi       = {10.1145/3369583.3392686},
  url       = {https://doi.org/10.1145/3369583.3392686}
}

@inproceedings{Nguyen2022FedBuff,
  author    = {John Nguyen and Kshitiz Malik and Hongyuan Zhan and Ashkan Yousefpour and Mike Rabbat and Mani Malek and Dzmitry Huba},
  title     = {Federated Learning with Buffered Asynchronous Aggregation},
  booktitle = {Proceedings of the 25th International Conference on Artificial Intelligence and Statistics (AISTATS)},
  year      = {2022},
  series    = {Proceedings of Machine Learning Research},
  volume    = {151},
  pages     = {3581--3607},
  url       = {https://proceedings.mlr.press/v151/nguyen22b.html}
}

@article{Liu2024FeDSC,
  author  = {Yi-Jing Liu and Gang Feng and Hongyang Du and Zheng Qin and Yao Sun and Jiawen Kang and Xiaoqian Li and Dusit Niyato},
  title   = {Adaptive Clustering-Based Straggler-Aware Federated Learning in Wireless Edge Networks},
  journal = {IEEE Transactions on Communications},
  year    = {2024},
  volume  = {72},
  number  = {12},
  pages   = {7757--7771},
  month   = dec,
  doi     = {10.1109/TCOMM.2024.3412763},
  url     = {https://doi.org/10.1109/TCOMM.2024.3412763}
}

@article{Reisizadeh2022StragglerResilient,
  author  = {Amirhossein Reisizadeh and Isidoros Tziotis and Hamed Hassani and Aryan Mokhtari and Ramtin Pedarsani},
  title   = {Straggler-Resilient Federated Learning: Leveraging the Interplay Between Statistical Accuracy and System Heterogeneity},
  journal = {IEEE Journal on Selected Areas in Information Theory},
  year    = {2022},
  volume  = {3},
  number  = {2},
  pages   = {197--205},
  doi     = {10.1109/JSAIT.2022.3205475},
  url     = {https://doi.org/10.1109/JSAIT.2022.3205475}
}

@misc{Fang2025PinchingAI,
  author        = {Fang Fang and Zhiguo Ding and Victor C. M. Leung and Lajos Hanzo},
  title         = {Pinching Antennas Meet {AI} in Next-Generation Wireless Networks},
  year          = {2025},
  eprint        = {2511.07442},
  archivePrefix = {arXiv},
  doi           = {10.48550/arXiv.2511.07442},
  url           = {https://arxiv.org/abs/2511.07442}
}

@article{Lyu2025PASSAirComp,
  author  = {Zhonghao Lyu and Haoyun Li and Yulan Gao and Ming Xiao and H. Vincent Poor},
  title   = {Pinching-Antenna Systems ({PASS}) Aided Over-the-Air Computation},
  journal = {IEEE Communications Letters},
  year    = {2025},
  volume  = {29},
  number  = {11},
  pages   = {2531--2535},
  doi     = {10.1109/LCOMM.2025.3601439},
  url     = {https://doi.org/10.1109/LCOMM.2025.3601439}
}

@misc{Lin2025PASSFL,
  author        = {Yushen Lin and Zihan Chen and Zhiguo Ding},
  title         = {Pinching-antenna-enabled Federated Learning: Tail Latency, Participation, and Convergence Analysis},
  year          = {2025},
  eprint        = {2510.23315},
  archivePrefix = {arXiv},
  doi           = {10.48550/arXiv.2510.23315},
  url           = {https://arxiv.org/abs/2510.23315}
}

@misc{Wu2025HCPAN,
  author        = {Bibo Wu and Fang Fang and Ming Zeng and Xianbin Wang},
  title         = {Straggler-Resilient Federated Learning over {A} Hybrid Conventional and Pinching Antenna Network},
  year          = {2025},
  eprint        = {2508.15821},
  archivePrefix = {arXiv},
  doi           = {10.48550/arXiv.2508.15821},
  url           = {https://arxiv.org/abs/2508.15821}
}

@ARTICLE{FL_Wireless_Qin,
  author={Qin, Zhijin and Li, Geoffrey Ye and Ye, Hao},
  journal={IEEE Wireless Communications}, 
  title={Federated Learning and Wireless Communications}, 
  year={2021},
  month={Oct.},
  volume={28},
  number={5},
  pages={134-140},
  doi={10.1109/MWC.011.2000501}}
\end{document}